\documentclass[amsmath,amssymb,notitlepage]{revtex4-1}

\usepackage{latexsym}
\usepackage{amsfonts}
\usepackage{amsthm}
\usepackage{color}
\usepackage{bbm}
\usepackage{dsfont}
\usepackage{graphicx}
\usepackage{textcomp}
\usepackage[nottoc,numbib]{tocbibind}
\usepackage{enumerate}
\usepackage{graphicx,tikz}
\usetikzlibrary{arrows,matrix,calc}
\usepackage[colorlinks=true,linkcolor=blue,citecolor=blue,urlcolor=blue]{hyperref}
\usepackage{mathrsfs} 


\newtheorem{theorem}{Theorem}
\newtheorem{proposition}[theorem]{Proposition}
\newtheorem{lemma}[theorem]{Lemma}
\newtheorem{corollary}[theorem]{Corollary}
\newtheorem{definition}[theorem]{Definition}


\newcommand{\ket}[1]{|#1\rangle} 
\newcommand{\bra}[1]{\langle#1|} 

\newcommand{\eg}{\emph{e.g.}~}

\DeclareMathOperator{\Tr}{Tr} 

\DeclareMathOperator{\reals}{\mathbb{R}}
\DeclareMathOperator{\comp}{\mathbb{C}} 
\DeclareMathOperator{\iden}{\mathbbm{1}}

\newcommand{\norm}[1]{\left\lVert#1\right\rVert}
\DeclareMathOperator{\sgn}{sgn}
\setlength\parindent{0pt}

\begin{document}
\title{A stable quantum Darmois-Skitovich theorem}

\author{Javier Cuesta}
\email{j.cuesta@tum.de}
\affiliation{Department of Mathematics, Technische Universit\"at M\"unchen, 85748 Garching, Germany}
\affiliation{Munich Center for Quantum Science and Technology (MCQST), M\"unchen, Germany}

\date{\today}

\vspace{-1cm}\begin{abstract}
\noindent
The Darmois-Skitovich theorem is a simple characterization of the normal distribution 
in terms of the independence of linear forms. We present here a non-commutative version of this theorem 
in the context of Gaussian bosonic states and show that this theorem is stable under small errors in its underlying conditions. An explicit estimate of the stability constants which depend on the physical parameters of the problem is given.

\end{abstract}

\maketitle

\tableofcontents

\section{ Introduction and Summary of results}

Among all characterizations of the normal distribution, the ones concerning the independence of linear forms stand out because of their simplicity. The landmark result of such classical characterizations is due to Darmois \cite{DG53} and Skitovich \cite{SV54}. Their theorem is a generalization to $n-$random variables and arbritrary coefficients of the following fact: if $X,Y$  are independent real-valued random variables with $X+Y$ and $X-Y$ independent, then $X$ and $Y$ are normally distributed with the same variance (see Theorem~\ref{teo:ds}). We will be interested in studying a quantum (read as non-commutative) version of the Darmois-Skitovich theorem, which we now write shortly as DS theorem. In this case, the role of the normal distribution is taken by Gaussian bosonic states: quantum states whose statistics is completely determined by the knowledge of the first and second moments and whose canonical observables obey the bosonic commutation relations.\\

Particularly noteworthy is the fact that the quantum DS theorem has a clear physical realization. Consider an arbitrary product state that passes through a beam-splitter as in figure \ref{fig:operGBS}. If the output state of the beam-splitter is also a product state, then the input states are Gaussian bosonic states with the \textit{same} second moments. This is the content of the quantum DS-theorem. Mathematically, the content of the quantum DS theorem is given on theorem \ref{teo:ids}.\\ 

That there does not exist any two copies of identical non-Gaussian states fulfilling this is by no means trivial, since the action of a beam splitter does not create second-moment cross-correlations for an identical product of quantum states (Gaussian and non-Gaussian). This operational characterization of Gaussian states was already known\cite{MRKT91}, however without a direct reference to the DS theorem. We show this characterization for a general $n-$mode Gaussian bosonic state by means of the DS theorem. This has the advantage of a much clear statistical interpretation and a simpler proof. Additionally, we show that a beam splitter is the only non-trivial linear operation that can have a factorizable output for \textit{all} identical input states (Lemma \ref{teo:mainclass}). The latter places the beam splitter as the basic element for detecting non-Gaussianity.\\

\begin{figure}[h]
\includegraphics[scale=1]{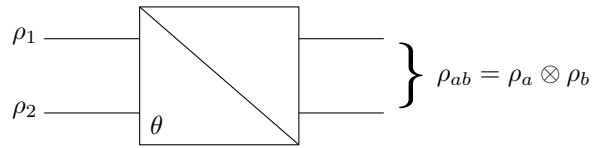}
\caption{Quantum Darmois-Skitovich theorem: let $U_{S}$ be the unitary operator corresponding to the action of a non-trivial beam-splitter transformation. The output state $\rho_{ab}:=U_{S}(\rho_{1}\otimes\rho_{2})U_{S}^{*}$ is a product state if and only
if $\rho_{1}$ and $\rho_{2}$ are Gaussian bosonic states with the same moments.}
\label{fig:operGBS}
\end{figure}

Of course in real life we cannot completely guarantee that two states are totally independent. Therefore it is crucial to study how stable the DS theorem is. This means, how does the conclusion of the quantum DS theorem changes, when we assume that the output state is not exactly a product, but is \textit{approximately close} to a product state. Our main result is a proof of the stability of the quantum DS theorem for quantum states whose all statistical moments in position and momentum, including mixed moments, exist and are finite. Such states are described by the set of Schwartz density operators~\cite{KKW15}. For independent input states whose output from a beam splitter is close in trace norm to a product state, we show that they are close in Hilbert-Schmidt norm to their respective Gaussian counterpart (i.e.~the Gaussian state which has the same first and second moments). Moreover, the corresponding second moments of the input states have to be approximately close as well. A precise mathematical statement of this result is given in theorem \ref{teo:pids}. The robustness of the quantum DS theorem depends on the transmitivity of the beam-splitter, the number of modes and the largest fourth moment of the output state. We give an explicit dependence on these physical parameters in the proof of this theorem.\\

The layout of the paper is as follows. In the next subsection we give the basic definitions and results in continuous-variable quantum information that will be used. Section \ref{mainresultssection} contains the main result. We give a simple proof for the characterization of Gaussian bosonic states using the DS theorem and then proceed to state the stability of the DS theorem. Section \ref{stabilitysubsection} introduces and summarizes the main properties of Schwartz operators. In section \ref{proofofteo:pids} we give a full proof of the stability of the DS theorem. Finally, in section~\ref{technicallemmassubsection} we show some auxiliary lemmas and in the appendix~\ref{sec:appendix} some explicit bounds are calculated in detail. They will be part of the constants appearing in the stability of the DS theorem \ref{teo:pids}.

\subsection{Notation and preliminaries}\label{notationsubsection}
We will be entirely concerned with continuous-variable systems with a discrete number $n$ of
modes. We denote by
\begin{equation}\label{eq:canop}
R:=(Q_{1},P_{1},\ldots,Q_{n},P_{n}),
\end{equation}
the vector of canonical operators for a quantum system and 
$R_{k},k=1,\ldots2n$ its components. Here 
$Q_{l},P_{l},l=1,\ldots n$ act on the $l-$tensor factor of the Fock 
space $\mathcal{H}=\bigotimes_{k=1}^{n}L^{2}(\reals)$ where $L^{2}(\reals)$ denotes 
the space of Lebesgue square integrable functions on $\reals$. 
The \textit{canonical commutation relations} (CCR) are defined  by
\begin{equation}
[R_{k},R_{l}]=i\sigma_{kl},
\label{eq:ccr}
\end{equation}
where $\sigma_{ij}$ are the entries of the symplectic matrix
\begin{equation}
\sigma=\bigoplus_{i=1}^{n}\omega \quad \text{with} \quad \omega:=\begin{pmatrix}{} 0 & 1 \\ -1 & 0 \end{pmatrix}.
\label{eq:sigmambasis}
\end{equation}

We frequently use the shorthand notation $R_{\xi}:=\xi\cdot\sigma R,\qquad \xi\in\reals^{2n}$. The phase-space description of a quantum state $\rho$ is determined by the \textit{characteristic function} $\chi:\reals^{2n}\to\mathbb{C}$ defined by
\begin{equation}\label{eq:defchf}
\chi(\xi):=\Tr[W_{\xi}\rho],
\end{equation}
where $W_{\xi}=e^{i\xi\cdot \sigma R}$ is the so-called Weyl operator. The CCR are encoded in the Weyl relation
 \begin{equation}
 W_{\xi}W_{\eta}=e^{-\frac{i}{2}\xi\cdot\sigma\eta}\ W_{\xi+\eta}, \quad\quad \xi,\eta\in\reals^{2n}.
 \label{eq:weylrel}
 \end{equation}
 The name of characteristic function for the map in Eq.~\eqref{eq:defchf} comes from an analogy with the classical characteristic function which is the Fourier transform of a probability distribution. In fact by taking a fixed direction in phase space we recover the classical characteristic function and from there, we can ``import'' all the known results of the classical world. This indeed, will be used in order to give a simple proof of the characterization of Gaussian bosonic states. The condition for a function $\chi:\reals^{2n}\to\mathbb{C}$ to be a bona-fide quantum characteristic function is the property of sigma-positiveness. For clarity we state these results and refer the reader to Ref.
 \onlinecite[section 5.4]{Hol82} for a proof.
 \begin{theorem}[Quantum Bochner-Khinchin]
For $\chi:\reals^{2n}\to\mathbb{C}$ to be a characteristic function of a quantum state, the following conditions are necessary and sufficient:
\begin{enumerate}
 \item $\chi(0)=1$ and $\chi$ is continuous at $\xi=0$,
 \item $\chi$ is $\sigma$-positive definite, i.e. for any $m\in\mathbb{N}$, 
 any set $\{\xi_{1},\xi_{2},\ldots,\xi_{m}\}$ of vectors in $\reals^{2n}$, and any set $\{c_{1},c_{2},\ldots,c_{m}\}$ of complex numbers
\begin{equation}
\sum_{k,l=1}^{m}c_{k}\overline{c_{l}}\ \chi(\xi_{k}-\xi_{l})\ e^{\frac{i}{2}\xi_{k}\cdot\sigma\xi_{l}}\geq0
\label{eq:sigmapos}
\end{equation}
\end{enumerate}
\label{teo:qbk}
\end{theorem}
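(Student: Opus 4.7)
The plan is to prove necessity and sufficiency separately. For \emph{necessity}, set $\chi(\xi):=\Tr[W_\xi\rho]$. Condition~1 is immediate: $\chi(0)=\Tr[\rho]=1$, and continuity of $\chi$ at $\xi=0$ follows from strong operator continuity of the Weyl system together with trace-class regularity of $\rho$ (approximate $\rho$ in trace norm by finite-rank operators and use $\|W_\xi\|=1$ to exchange limits). For condition~2, the Weyl relation \eqref{eq:weylrel}, applied with $\xi=\xi_k$, $\eta=-\xi_l$ and using $W_{-\xi}=W_\xi^{*}$, gives $W_{\xi_k-\xi_l}=e^{-\frac{i}{2}\xi_k\cdot\sigma\xi_l}\,W_{\xi_k}W_{\xi_l}^{*}$, so
\begin{equation*}
\sum_{k,l=1}^{m}c_k\overline{c_l}\,\chi(\xi_k-\xi_l)\,e^{\frac{i}{2}\xi_k\cdot\sigma\xi_l}=\Tr\bigl[AA^{*}\rho\bigr],\qquad A:=\sum_{k=1}^{m}c_k W_{\xi_k}.
\end{equation*}
Cyclicity of the trace rewrites the right-hand side as $\Tr\bigl[(A^{*}\rho^{1/2})^{*}(A^{*}\rho^{1/2})\bigr]\geq 0$.

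For \emph{sufficiency}, one would like to invert the quantum Fourier transform and set
\begin{equation*}
\rho\,=\,\frac{1}{(2\pi)^n}\int_{\reals^{2n}}\chi(-\xi)\,W_\xi\,d^{2n}\xi,
\end{equation*}
but the integral need not converge in any useful operator topology under merely the stated hypotheses. The robust substitute is a GNS-type construction: on the $*$-algebra $\mathcal{A}$ linearly spanned by the Weyl operators, define a functional $\omega$ by $\omega(W_\xi):=\chi(\xi)$ and extend by linearity. Expanding an arbitrary $A=\sum_k c_k W_{\xi_k}\in\mathcal{A}$ and applying the same Weyl-relation identity as above shows that $\omega(A^{*}A)$ equals the left-hand side of \eqref{eq:sigmapos}; thus the $\sigma$-positivity hypothesis is exactly what is needed to make $\omega$ a positive linear functional. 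The GNS construction then produces a cyclic $*$-representation $(\pi,\hik,\Omega)$ with $\omega(\cdot)=\langle\Omega|\pi(\cdot)\Omega\rangle$ in which the $\pi(W_\xi)$ obey the Weyl commutation relations.

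The final step is to promote this abstract vector state into a trace-class density operator on the Fock space $\hilb$. Continuity of $\chi$ at the origin implies regularity of $\pi$, so by the Stone--von Neumann uniqueness theorem $\pi$ is unitarily equivalent to a direct sum of copies of the Schr\"odinger representation on $\hilb$; under this equivalence $\hik\cong\hilb\otimes\ell^{2}(I)$ for a suitable index set $I$, and tracing out the multiplicity label in the image of $|\Omega\rangle\langle\Omega|$ yields the desired $\rho\geq 0$ on $\hilb$ with $\Tr[W_\xi\rho]=\omega(W_\xi)=\chi(\xi)$; the normalization $\Tr[\rho]=1$ is secured by $\chi(0)=1$. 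I expect the main obstacle to be precisely this last identification: upgrading continuity of $\chi$ at a single point to regularity of the whole Weyl system in the GNS representation, invoking Stone--von Neumann carefully in the multi-mode setting, and verifying that the partial trace produces a genuine trace-class operator rather than merely a positive bounded one.
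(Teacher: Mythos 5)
Your argument is correct in outline, but note that the paper does not prove Theorem~\ref{teo:qbk} at all: it is quoted with a pointer to Ref.~\onlinecite{Hol82}, Section~5.4, and what you propose is essentially the proof given there (construct a Hilbert space from the $\sigma$-positive-definite kernel $(\xi,\eta)\mapsto\chi(\xi-\eta)e^{\frac{i}{2}\xi\cdot\sigma\eta}$ --- your GNS construction is the same device --- obtain a Weyl system, upgrade continuity at the origin to regularity, invoke Stone--von Neumann, and take a partial trace over the multiplicity space). Three remarks. First, a bookkeeping slip: with the convention of Eq.~\eqref{eq:weylrel}, the left-hand side of Eq.~\eqref{eq:sigmapos} equals $\omega(AA^{*})$ rather than $\omega(A^{*}A)$ for $A=\sum_{k}c_{k}W_{\xi_{k}}$ --- consistent with your necessity computation, where you correctly arrived at $\Tr[AA^{*}\rho]$; this is harmless, since positivity of $\omega$ on $\{AA^{*}\}$ and on $\{A^{*}A\}$ are equivalent (replace $A$ by $A^{*}$), but the two halves of your write-up should agree. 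Second, the step you flag as the main obstacle is indeed the one that must be written out: the standard $3\times3$-minor argument for ($\sigma$-)positive-definite functions, applied to the nodes $0,\xi,\eta$, shows that continuity of $\chi$ at $0$ forces uniform continuity of $\chi$ on all of $\reals^{2n}$; this gives weak (hence strong) continuity of $t\mapsto\pi(W_{t\xi})$ on the dense span of $\{\pi(W_{\eta})\Omega\}$, so the GNS representation is regular and separable, and Stone--von Neumann applies since the number of modes $n$ is finite. Third, your closing worry is unfounded: once $\hik\cong\hilb\otimes\ell^{2}(I)$, the partial trace of the rank-one density operator $|\Omega\rangle\langle\Omega|$ over $\ell^{2}(I)$ is automatically a positive trace-class operator of unit trace, so no additional verification is needed beyond $\Tr[W_{\xi}\rho]=\omega(W_{\xi})=\chi(\xi)$.
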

 
  \begin{corollary}[Classical Marginals] \label{cor:marginals}
 Let $\chi(\xi)$ be the characteristic function of a quantum state. Then for every fixed $\xi\in\reals^{2n}$ the
 function
 \begin{equation*}
 \reals\ni t\mapsto\chi(t\xi),
 \end{equation*}
 is a classical characteristic function, i.e. the Fourier transform of a classical probability distribution. 
 \end{corollary}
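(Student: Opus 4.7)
The natural approach is to verify that $f(t) := \chi(t\xi)$ satisfies the hypotheses of the classical Bochner theorem, exploiting the quantum Bochner--Khinchin characterization (Theorem~\ref{teo:qbk}) already at our disposal. The plan is to specialize the $\sigma$-positive-definite condition~\eqref{eq:sigmapos} to vectors of the form $\xi_k = t_k\,\xi$ with $t_k \in \reals$ along the fixed direction $\xi$, and to exploit the fact that the symplectic phase factors degenerate along such a one-parameter family.

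Concretely, given arbitrary $t_1,\ldots,t_m \in \reals$ and $c_1,\ldots,c_m \in \comp$, I would substitute $\xi_k = t_k\xi$ into~\eqref{eq:sigmapos}. The resulting exponent is $\tfrac{i}{2}(t_k\xi)\cdot\sigma(t_l\xi) = \tfrac{i}{2}\,t_k t_l\,\xi\cdot\sigma\xi$, which vanishes because $\sigma$ is antisymmetric. The inequality~\eqref{eq:sigmapos} then collapses to
\begin{equation*}
\sum_{k,l=1}^{m} c_k\,\overline{c_l}\,\chi\bigl((t_k-t_l)\xi\bigr) \;\geq\; 0,
\end{equation*}
which is precisely classical positive-definiteness of $f$. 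Combined with $f(0) = \chi(0) = 1$ and continuity of $f$ at $t=0$ (inherited from the continuity of $\chi$ at the origin by composition with $t\mapsto t\xi$), the classical Bochner theorem then delivers a probability measure on $\reals$ whose Fourier transform is $f$.

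The single substantive input is thus the observation that the symplectic form vanishes on any one-dimensional real subspace, so that the twisted $\sigma$-positivity of $\chi$ degenerates to ordinary positive-definiteness along the ray through $\xi$. I do not anticipate any analytic obstacles beyond the trivial restriction of $\chi$'s continuity at $0$ to this ray, and the proof should accordingly be only a few lines long.
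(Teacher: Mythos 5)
Your argument is correct and is exactly the intended derivation: the paper states this as a corollary of the quantum Bochner--Khinchin theorem (deferring the written proof to Holevo), and the key point is precisely yours, namely that $\xi\cdot\sigma\xi=0$ by antisymmetry of $\sigma$, so the twisted positivity condition \eqref{eq:sigmapos} restricted to the ray $\{t\xi\}$ reduces to ordinary positive definiteness, after which classical Bochner applies. No gaps; the remark that continuity at $t=0$ suffices (since positive definiteness upgrades it to continuity everywhere) is the only point worth making explicit.
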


 As in the classical case, the characteristic function is a moment generating function. The \textit{displacement vector} is defined by the entries $d_{k}:=\Tr[\rho R_{k}]$ and we say that the state is \textit{centered} if $d=0$. The \textit{covariance matrix} (CM) is defined by the matrix entries $\varGamma_{kl}=\Tr[\rho\{R_{k}-d_{k},R_{l}-d_{l}\}]$. In order that $\varGamma$ corresponds to a genuine quantum CM the CCR impose the further condition~\cite{SMD94,SSM87} $\varGamma+i\sigma\geq0$, which is nothing but the uncertainty principle expressed in a coordinate-free form.\\

 A \textit{Gaussian bosonic state} is defined as a state with a Gaussian characteristic function
\begin{equation}
\chi(\xi)=\text{exp}[-\frac{\xi\cdot\varGamma\xi}{4}+i\xi\cdot d].
\end{equation}
We write $\chi_{\rho}$ to emphasize that $\chi$ is the characteristic function of the state $\rho$. We denote by $M(2n,\reals)$ and $Sp(4n,\reals)$ the set of $2n\times2n$ matrices with real entries and the group of $4n\times4n$ symplectic matrices with real entries, respectively. The latter is defined as the group of matrices $S\in M(2n,\reals)$ such that $S\sigma S^{T}=\sigma$.\\

Unitary Gaussian operations, i.e. unitary evolutions coming from quadratic Hamiltonians in $P$ and $Q$, are described by symplectic transformations \cite{ADMS95}. 
These operations have the property that
\begin{equation}
 \chi_{(U_{S}\rho U^{*}_{S})}(\xi)=\chi_{\rho}(S^{T}\xi),
 \label{eq:cov}
\end{equation}

where $U_{S}$ is a unitary operation associated to the symplectic transformation $S$ (strictly speaking $U_{S}$ is determined up to a phase, however this ambiguity disappears in the conjugation $U_{S}\cdot U_{S}^{*}$). The unitary evolution of a Gaussian state is completely determined by the new displacement vector $d'=Sd$ and CM,
$\varGamma'=S\varGamma S^{T}$. \\
A one mode \textit{non-trivial beam splitter} transformation is the one 
corresponding to the symplectic transformation
\begin{equation*}
S=\begin{pmatrix}{} \cos\theta\iden_{2} & \sin\theta\iden_{2} \\  -\sin\theta\iden_{2} & \cos\theta\iden_{2} \end{pmatrix}, \qquad \theta\neq m\pi/2,\quad m=0,1,2,\ldots
\label{eq:bseq}
\end{equation*}
It corresponds to a unitary evolution where the Hamiltonian is 
\begin{equation*}
H=\frac{\theta}{4}(a_{1}^{*}a_{2}+a_{2}^{*}a_{1}),
\end{equation*}
with  $a_{j}=(Q_{j}+iP_{j})/\sqrt{2},a^{*}_{j}=(Q_{j}-iP_{j})/\sqrt{2}$, $j=1,2$ the creation and annihilation operators. 
A \textit{local transformation} acts in each separated mode and corresponds therefore to transformations that can be written as $S=\bigoplus_{k=1}^{n}S_{k}$. 
In the context of quantum optics, example of local transformations are phase-shifts and one-mode squeezing transformations.\\

The \textit{Wigner phase space distribution} is defined to be the (symplectic) Fourier transform of the characteristic function

\begin{equation}
\mathcal{W}(\eta)=\frac{1}{(2\pi)^{n}}\int e^{i\eta\cdot\sigma\xi}\;\chi(\xi)\;d\xi.
\end{equation}
Its importance lies on the fact that all one-dimensional marginals are (due to corollary \ref{cor:marginals}) positive distributions in phase space, which can be asociated to the usual probability distributions for example on position and momentum of a state $\rho$. \\

We write $A^{*}$ for the adjoint operator of $A$ and $\norm{\cdot}$ for the uniform norm. The trace norm is defined as $\norm{A}_{1}=\Tr\sqrt{A^{*}A}$ and the Hilbert-Schmidt (HS) norm $\norm{A}_{2}=\left(\Tr[A^{*}A]\right)^{1/2}$. We have the order $\norm{\cdot}\leq\norm{\cdot}_{2}\leq\norm{\cdot}_{1}$. These norms are in fact unitarily invariant and $\norm{A^{*}}_{p}=\norm{A}_{p}$ for $p=1,2$. The usual norm in $L^{2}(\reals^{2n})$ will be denoted by $\norm{\cdot}_{L^{2}(\reals^{2n})}$. We use sometimes the Dirac notation for a vector $\ket{\phi}\in\mathcal{H}$ and the inner product notation $\bra{\phi}\varphi\rangle$. The commutator and anticommutator are written as $[\cdot,\cdot]$ and $\{\cdot,\cdot\}$ respectively. The space of bounded operators on the Hilbert space $\mathcal{H}$ is denoted by $\mathfrak{B}(\mathcal{H})$.\\

The inverse relation of Eq.~\eqref{eq:defchf} is called the \textit{Weyl transform}

\begin{equation}
T=\frac{1}{(2\pi)^{n}}\int \Tr[W_{\xi}T]W_{-\xi}d\xi,
\end{equation}
where the integral converges weakly for any Hilbert-Schmidt operator $T$. This is a consequence of the \textit{quantum Parseval theorem}\cite{Hol82} which due to its importance we state here.

\begin{theorem}[Quantum Parseval relation]\label{th:quantumparseval} Let $\{W_{\xi}\}$ be a strongly continuous and irreducible Weyl systems acting on the Hilbert space $\mathcal{H}$ with respective phase space $X\simeq\reals^{2n}\ni\xi$. Then $T\mapsto \Tr[W_{\xi}T]$ extends uniquely to an isometric map from the Hilbert space of Hilbert-Schmidt class operators on $\mathcal{H}$ onto $L^{2}(X)$, such that 
\begin{equation}
\Tr T^{*}_{1}T_{2}=\frac{1}{(2\pi)^{n}}\int \overline{\Tr[W_{\xi}T_{1}]}\Tr[W_{\xi}T_{2}]d\xi.
\end{equation}
\end{theorem}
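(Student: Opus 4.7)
The plan is to reduce to a concrete representation via the Stone--von Neumann uniqueness theorem, verify the isometry on rank-one operators (where it reduces to a classical Plancherel identity on $\mathbb{R}^{2n}$), and then extend by density and establish surjectivity onto $L^{2}(X)$.

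By Stone--von Neumann, any strongly continuous irreducible Weyl system is unitarily equivalent to the Schr\"odinger representation on $L^{2}(\mathbb{R}^{n})$, with $Q_{j}$ acting by multiplication and $P_{j}$ by $-i\partial_{x_{j}}$. Since the target identity is unitarily invariant, I work in this representation. Writing $\xi=(q,p)\in\mathbb{R}^{n}\times\mathbb{R}^{n}$, the CCR together with the Baker--Campbell--Hausdorff formula yield the explicit action
\begin{equation*}
(W_{\xi}\phi)(x)=e^{i\alpha(q,p)}\,e^{-ip\cdot x}\,\phi(x+q),
\end{equation*}
where $\alpha(q,p)=-q\cdot p/2$ is the bilinear phase fixed by the convention in Eq.~\eqref{eq:sigmambasis}, so $W_{\xi}$ is a phase-modulated translation.

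For rank-one operators $T_{j}=|\phi_{j}\rangle\langle\psi_{j}|$ with $\phi_{j},\psi_{j}$ in the Schwartz class, the map in question collapses to a matrix element
\begin{equation*}
\Tr[W_{\xi}T_{j}]=\langle\psi_{j}|W_{\xi}|\phi_{j}\rangle=e^{i\alpha(q,p)}\int_{\mathbb{R}^{n}}\overline{\psi_{j}(x)}\,e^{-ip\cdot x}\,\phi_{j}(x+q)\,dx.
\end{equation*}
Substituting into the right-hand side of the Parseval identity, the $\alpha$ phases cancel, Fubini applies since all integrands are Schwartz, and the $p$-integral produces $(2\pi)^{n}\delta(y-x)$ by Fourier inversion on $\mathbb{R}^{n}$. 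The $y$-integral collapses, and the affine substitution $u=x+q$ separates the remaining integrals, yielding $(2\pi)^{n}\langle\psi_{2}|\psi_{1}\rangle\langle\phi_{1}|\phi_{2}\rangle = (2\pi)^{n}\Tr[T_{1}^{*}T_{2}]$. By linearity the identity extends to finite sums of rank-one Schwartz operators, which form a dense subspace of the Hilbert-Schmidt class; a standard completion argument then upgrades $T\mapsto\Tr[W_{\xi}T]$ to a unique isometric embedding of the entire Hilbert-Schmidt class into $L^{2}(X)$.

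For surjectivity, the image already contains every function $\xi\mapsto\langle\psi|W_{\xi}|\phi\rangle$ with Schwartz $\psi,\phi$; by the Schr\"odinger formula these are (up to the phase $e^{i\alpha}$) partial Fourier transforms in $x$ of the sesquilinear product $F(q,x):=\overline{\psi(x)}\phi(x+q)$. Since the tensor products $\overline{\psi}\otimes\phi$ are total in $L^{2}(\mathbb{R}^{n}\times\mathbb{R}^{n})$ and the affine change $(x,q)\mapsto(x,x+q)$ is a measure-preserving bijection, the set of such $F$ is total in $L^{2}(\mathbb{R}^{2n})$; the Fourier transform being an $L^{2}$-isomorphism then forces the image to be dense, hence equal to $L^{2}(X)$. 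The principal obstacle is purely technical: justifying the Fubini/Plancherel step for general Hilbert-Schmidt data requires first carrying out the calculation on Schwartz inputs, where absolute convergence is automatic, and then passing to the $L^{2}$-limit using continuity of both sides in the Hilbert-Schmidt norm together with the density of Schwartz vectors in $L^{2}(\mathbb{R}^{n})$.
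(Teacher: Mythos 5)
Your proof is correct. Note, however, that the paper does not prove this theorem at all: it is quoted verbatim from Holevo (Ref.~\onlinecite[section 5.4]{Hol82}) and used as a black box, so there is no in-paper argument to compare against. Your route --- Stone--von Neumann reduction to the Schr\"odinger representation, explicit verification on rank-one operators with Schwartz vectors via classical Plancherel, isometric extension by density, and surjectivity from totality of the functions $(q,x)\mapsto\overline{\psi(x)}\phi(x+q)$ under the measure-preserving shear --- is the standard textbook proof and all steps (the phase $\alpha(q,p)=-q\cdot p/2$, the $(2\pi)^{n}$ bookkeeping, the closed-range argument for ontoness) check out.
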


This theorem also implies that Eq.~\eqref{eq:defchf} is also valid for $T$ Hilbert-Schmidt. The map $\xi\mapsto\Tr W_{\xi}T$ is called the \textit{inverse Weyl transform} of $T$; being the characteristic function the special case $T$ a density operator.\\
We will be using repeatedly the following trace inequalities. If $B$ is a bounded operator and $T$ a trace-class operator, then a particular case of H\"older's inequality states
\begin{equation*}
\Tr BT\leq \norm{B}\norm{T}_{1}.
\end{equation*}
Let $T_{1},T_{2}$ be two Hilbert-Schmidt operators. The trace operator version of the Cauchy-Schwarz inequality is
\begin{equation*}
\Tr T_{1}T_{2} \leq \norm{T_{1}T_{2}}_{1}\leq \norm{T_{1}}_{2}\norm{T_{2}}_{2}.
\end{equation*}

 \section{Main result}\label{mainresultssection}

In the next subsections we present the quantum version of the DS theorem and our main stability result. The detailed proof of the stability of the DS theorem is presented in section \ref{proofofteo:pids}. 

\subsection{Quantum Darmois-Skitovich theorem}\label{mainresults1}

We are interested in a quantum analogue of the following theorem.

\begin{theorem}[\textit{Darmois-Skitovich}]
Let $X_{1},\ldots,X_{n}$ ($n\geq2$) be independent random variables and $a_{1},\ldots,a_{n},b_{1},\ldots,b_{n}\in\reals\backslash\{0\}$. 
If the two linear forms
\begin{equation}\label{eq:linearforms}
 Y_{1}=\sum_{i}a_{i}X_{i}\quad \text{and}\quad
Y_{2}=\sum_{i}b_{i}X_{i}\quad \text{are independent,}
\end{equation}
then $X_{i}$ is normally distributed.
\label{teo:ds}
\end{theorem}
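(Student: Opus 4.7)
The plan is to encode the two independence hypotheses as a single multiplicative functional equation for the characteristic functions $\phi_k(t) := \mathbb{E}[e^{itX_k}]$, pass to logarithms on a neighbourhood of the origin, and then show that the resulting additive equation forces each $\log\phi_k$ to be a polynomial of degree at most two. The conclusion will then follow from Marcinkiewicz's theorem, which identifies each $X_k$ as Gaussian.

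First I would exploit the independence of the $X_k$ to compute, for any $(s,t)\in\reals^{2}$,
$$\mathbb{E}\bigl[e^{i(sY_{1}+tY_{2})}\bigr]=\prod_{k=1}^{n}\phi_{k}(sa_{k}+tb_{k}),$$
while the hypothesised independence of $Y_{1}$ and $Y_{2}$ gives the same expectation as $\prod_{k=1}^{n}\phi_{k}(sa_{k})\cdot\prod_{k=1}^{n}\phi_{k}(tb_{k})$. Since each $\phi_{k}$ is continuous with $\phi_{k}(0)=1$, all of them are nonvanishing on a common neighbourhood $U$ of the origin, where $\psi_{k}:=\log\phi_{k}$ is well defined. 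Taking logarithms yields the additive functional equation
$$\sum_{k=1}^{n}\psi_{k}(sa_{k}+tb_{k})=\sum_{k=1}^{n}\psi_{k}(sa_{k})+\sum_{k=1}^{n}\psi_{k}(tb_{k}).$$

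The key step is to deduce from this equation that each $\psi_{k}$ is locally a polynomial of degree at most two. Applying a mixed finite-difference operator $\Delta^{s}_{h_{1}}\Delta^{t}_{h_{2}}$ kills the pure $s$-term and the pure $t$-term on the right, leaving a relation purely in the coupled arguments $sa_{k}+tb_{k}$. Iterating this procedure (or, once the $\phi_{k}$ are shown to be smooth enough, differentiating) and using that all coefficients $a_{k},b_{k}$ are nonzero allows one to decouple the summands via a Vandermonde-style argument and to conclude that $\Delta_{h}^{3}\psi_{k}\equiv 0$ on $U$. Thus each $\psi_{k}$ is a polynomial of degree at most two on $U$, and Marcinkiewicz's theorem upgrades this local statement to a global quadratic, so that $\phi_{k}(t)=\exp\bigl(i d_{k}t-\sigma_{k}^{2}t^{2}/2\bigr)$ and $X_{k}\sim\mathcal{N}(d_{k},\sigma_{k}^{2})$.

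The main obstacle is precisely the combinatorial bookkeeping in the iterated finite-difference step: since every $\psi_{k}$ is evaluated at the same linear form $sa_{k}+tb_{k}$, isolating one summand at a time requires choosing enough increment pairs $(h_{1},h_{2})$ to produce a linear system in the $\psi_{k}$-differences with full rank. This is where the hypothesis $a_{k},b_{k}\neq 0$ enters essentially, as otherwise the variable $X_{k}$ would drop out of one of the two linear forms and no constraint could be imposed on its law.
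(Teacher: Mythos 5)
The paper never proves Theorem~\ref{teo:ds} itself: it is quoted as a classical result, with proofs deferred to Feller and to Kagan--Linnik--Rao, and only the resulting functional equation for characteristic functions is used later. So your sketch can only be measured against the standard literature proof, which it does follow in outline (characteristic functions, logarithms near the origin, finite differences, Marcinkiewicz). Two steps, however, are genuinely incomplete as written.

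First, the ``Vandermonde-style'' decoupling fails whenever two indices share the same ratio $a_j/b_j=a_k/b_k$ (e.g.\ $Y_1=X_1+X_2+X_3$, $Y_2=X_1+X_2-X_3$): such $\psi_j$ and $\psi_k$ enter through proportional linear forms in $(s,t)$, so no choice of increment pairs yields a full-rank system separating them --- a Vandermonde argument needs distinct nodes, here distinct ratios. What the differencing actually gives is that the \emph{sum} of the $\psi_k$ over each ratio class is a polynomial, i.e.\ that the corresponding sum of independent variables is Gaussian; to conclude that each individual $X_k$ is Gaussian you must then invoke Cram\'er's decomposition theorem (independent summands of a Gaussian are Gaussian), a nontrivial ingredient your sketch omits entirely. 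Second, the finite-difference step does not produce $\Delta_h^3\psi_k\equiv0$ directly: the order of difference needed to kill the other summands grows with the number of distinct ratios, so a priori one only obtains that each $\psi_k$ (or class sum) is a polynomial of degree up to roughly $n$. Marcinkiewicz's theorem is precisely what reduces that degree to two; it is not a device for globalizing a local quadratic. The local-to-global passage instead requires showing the $\phi_k$ are nonvanishing on all of $\reals$ (propagated from the neighbourhood $U$ via the functional equation) so that $\phi_k=e^{P_k}$ holds globally, or alternatively a moment-determinacy argument for the normal law. With Cram\'er's theorem added and the roles of the degree bound and of Marcinkiewicz straightened out, the argument closes.
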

Different proofs and the history of the classical DS theorem can be found in p. 78 in Ref.~\onlinecite{feller-vol-2} and in Ref.~\onlinecite{KLR73}.
Our setup for the quantum version is the following.  We consider two $n-$mode quantum states $\rho_{1},\rho_{2}\in\mathfrak{B}(\mathcal{H})$ with respective canonical operators 
\begin{equation}\label{eq:notationR1R2}
\begin{split}
R_{1}&=(Q_{1},P_{1},\ldots,Q_{n},P_{n}) \\
R_{2}&=(Q_{n+1},P_{n+1},\ldots,Q_{2n},P_{2n})
\end{split}
\end{equation}
and write $R=(R_{1},R_{2})$. We assume that $\rho_{1}$ and $\rho_{2}$ are independent so that their state in $\mathfrak{B}(\mathcal{H}\otimes\mathcal{H})$ is a product state $\rho_{1}\otimes\rho_{2}$. We refer to $\rho_{1}$ and $\rho_{2}$ as input states.\\

The action of producing linear forms of random variables can be mimiced by (Gaussian) unitary evolutions $U_{S}$. These unitary evolutions are generated by Hamiltonians that are quadratic expressions in the canonical operators. Moreover\cite{ADMS95}, the unitary evolution $U_{S}\in\mathfrak{B}(\mathcal{H}\otimes\mathcal{H})$ is associated with a symplectic transformation $S\in Sp(4n,\reals)$. In other words, the linear trasformation
\begin{equation}\label{eq:condition_S}
R\mapsto SR \qquad \qquad \qquad S\in Sp(4n,\reals),
\end{equation}
 corresponds to a unitary evolution $\rho\mapsto U_{S}\rho U^{*}_{S}$.\\

In order to obtain an analogue of Eq.~\eqref{eq:linearforms}, we classify the set of unitaries $U_{S}$ which produce a bipartite independent output, i.e. such that

\begin{equation}\label{eq:condition}
U_{S}(\rho_{1}\otimes\rho_{2}) U^{*}_{S}=\rho_{a}\otimes\rho_{b},
\end{equation}
where $\rho_{a},\rho_{b}\in\mathfrak{B}(\mathcal{H})$ are $n-$mode quantum states. This is equivalent to classifying the respective set of symplectic transformation for which Eq.~\eqref{eq:condition} holds. If we are to expect that $U_{S}$ preserves independence, the transformation $S$ should at least preserve uncorrelated inputs (a generally weaker condition than independence which only deals with the second moments). Furthermore, acting locally on each state or swapping them are trivial operations which preserve independence for arbitrary states. So we need to consider other operations in order to obtain a meaningful statement for the quantum DS theorem. The following two lemmas show in fact that there is only one non-trivial symplectic transformation for our setup.

\begin{lemma}
Let $S\in Sp(4n,\mathbb{R})$ be such that
\begin{equation*}
 S\begin{pmatrix}{} \varGamma_{1} & 0 \\ 0 & \varGamma_{2} \end{pmatrix}S^{T}=\begin{pmatrix}{} \cdot & 0 \\ 0 & * \end{pmatrix} \quad \text{for all CM}\quad\varGamma_{1},\varGamma_{2}\in\mathbb{R}^{2n\times 2n}.
 \label{eq:symp2}
 \end{equation*}
 Here $*,\cdot$ denote any $\text{CM}\in\mathbb{R}^{2n\times 2n}$.
Then $S$ is either of the form $\begin{pmatrix}{} A & 0 \\ 0 & D \end{pmatrix}$ or $\begin{pmatrix}{} 0 & B \\ C & 0 \end{pmatrix}$ 
with $A,B,C,D\in Sp(2n,\reals)$.
\label{teo:class2}
\end{lemma}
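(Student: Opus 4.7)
The plan is to write $S$ in $2n \times 2n$ block form, $S = \begin{pmatrix} A & B \\ C & D \end{pmatrix}$, and exploit the fact that the off-diagonal block of $S\,\mathrm{diag}(\Gamma_1, \Gamma_2)\,S^T$ equals $A\Gamma_1 C^T + B \Gamma_2 D^T$. By hypothesis this quantity must vanish for all valid covariance matrices $\Gamma_1, \Gamma_2$, and the objective is to turn this single vanishing identity into strong structural constraints on the four blocks.

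First, I would upgrade the hypothesis to the much stronger statement that $A M C^T = 0$ and $B M D^T = 0$ for \emph{every} symmetric matrix $M \in \reals^{2n \times 2n}$. Fixing $\Gamma_2 = \iden$ and varying $\Gamma_1$ shows that $A\Gamma_1 C^T$ is constant on the convex cone of CMs, hence $A(\Gamma_1 - \Gamma_1') C^T = 0$ for any two CMs. Because $\lambda \iden$ is a CM whenever $\lambda \geq 1$ and the perturbations $\lambda \iden + \epsilon M$ remain valid CMs for any symmetric $M$ provided $\epsilon$ is small enough, differences of CMs span the entire space of symmetric matrices; the claim follows, and the symmetric manipulation yields the analogous identity for $B M D^T$.

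Next, I would specialize to $M = vv^T$, obtaining $(Av)(Cv)^T = 0$ and hence that every $v \in \reals^{2n}$ lies in $\ker A \cup \ker C$. Since a union of two subspaces is a subspace only when one contains the other, this forces $A = 0$ or $C = 0$; the same reasoning applied to $B M D^T$ gives $B = 0$ or $D = 0$. Of the four resulting combinations, the two that zero out an entire block-row of $S$ (namely $A = B = 0$ and $C = D = 0$) make $S$ singular and hence are incompatible with $S \in Sp(4n,\reals)$. Only the block-diagonal pattern $B = C = 0$ and the anti-block-diagonal pattern $A = D = 0$ survive.

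Finally, I would plug each surviving pattern into the symplectic relation $S \sigma S^T = \sigma$ with $\sigma = \sigma_{2n} \oplus \sigma_{2n}$ and read off $X \sigma_{2n} X^T = \sigma_{2n}$ for each nonzero block $X$, placing all of them in $Sp(2n,\reals)$. I expect the only mildly delicate point to be the first paragraph's claim that differences of CMs span the full space of symmetric matrices: the uncertainty constraint $\Gamma + i\sigma \geq 0$ needs handling, but rescaling the identity provides enough slack. Everything else is elementary linear algebra together with the standard fact that a vector space is not the union of two proper subspaces.
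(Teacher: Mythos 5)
Your proof is correct, and the key step is handled by a genuinely different (and arguably cleaner) argument than the paper's. Both proofs start identically: block-decompose $S$ and reduce the hypothesis to $A\varGamma_{1}C^{T}+B\varGamma_{2}D^{T}=0$, then decouple the two summands (the paper rescales $\varGamma_{1}\to2\varGamma_{1}$; you fix $\varGamma_{2}=\iden$ and observe $A\varGamma_{1}C^{T}$ is constant --- essentially the same idea). The divergence is in showing that $A\varGamma C^{T}=0$ for all CMs forces $A=0$ or $C=0$. The paper takes singular value decompositions of $A$ and $C$ and exhibits a particular $\varGamma_{1}=V^{-1}PZ^{-T}$ making the product nonzero; as written this choice is not obviously symmetric, let alone a valid CM, so the paper's argument is somewhat sketchy at exactly this point. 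You instead linearize: perturbations $\lambda\iden+\epsilon M$ of a strictly admissible CM show that differences of CMs span all symmetric matrices, so $AMC^{T}=0$ for every symmetric $M$; specializing to rank-one $M=vv^{T}$ gives $(Av)(Cv)^{T}=0$, hence $\reals^{2n}=\ker A\cup\ker C$, and the standard fact that a vector space is not the union of two proper subspaces finishes the job. This is fully rigorous, avoids the SVD entirely, and is in spirit closer to the symmetrization argument the paper deploys later in Lemma \ref{l:lppm} for the identical-input case. Your elimination of the singular patterns ($A=B=0$ or $C=D=0$, and implicitly $A=C=0$ or $B=D=0$) via invertibility of $S$, and the final reading-off of $X\sigma X^{T}=\sigma$ for the surviving blocks, match the paper. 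No gaps.
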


If we consider identical inputs we obtain:

\begin{lemma}
Let $S\in Sp(4n,\mathbb{R})$ be such that
\begin{equation}
 S\begin{pmatrix}{} \varGamma & 0 \\ 0 & \varGamma \end{pmatrix}S^{T}=\begin{pmatrix}{} \cdot & 0 \\ 0 & * \end{pmatrix} \quad \text{for all CM}\quad\varGamma\in\mathbb{R}^{2n\times 2n}.
 \label{eq:symp}
 \end{equation}
 Here $*,\cdot$ denote any $\text{CM}\in\mathbb{R}^{2n\times 2n}$.
Then 
\begin{equation}
\begin{split}
S&=
\begin{pmatrix}{} X & 0 \\  0 & Y \end{pmatrix}\frac{1}{\sqrt{1+\alpha^{2}}}
\begin{pmatrix}{} \iden_{2n} & \alpha\iden_{2n}\\  -\alpha\iden_{2n} & \iden_{2n} \end{pmatrix},\\
&=
\begin{pmatrix}{} 0 & X \\  Y & 0 \end{pmatrix}\frac{1}{\sqrt{1+\gamma^{2}}}
\begin{pmatrix}{} \iden_{2n} & -\gamma\iden_{2n}\\  \gamma\iden_{2n} & \iden_{2n} \end{pmatrix},
\label{eq:main}
\end{split}
\end{equation}
where $X,Y\in Sp(2n,\mathbb{R})$ and $\alpha,\gamma\in\mathbb{R}\cup\{\pm\infty\}$. 
\label{teo:mainclass}
\end{lemma}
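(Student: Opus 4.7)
The plan is to write $S$ as a $2 \times 2$ block matrix of $2n \times 2n$ real blocks, translate the hypothesis into a matrix identity that must hold for all symmetric $\Gamma$, and then use rank-$1$ test matrices to force a strong structural constraint on the blocks. Write
\[
S = \begin{pmatrix} A & B \\ C & D \end{pmatrix}, \qquad A,B,C,D \in M(2n,\reals).
\]
The off-diagonal block of $S\,\mathrm{diag}(\Gamma,\Gamma)\,S^{T}$ equals $A\Gamma C^{T} + B\Gamma D^{T}$, so the hypothesis reads
\[
A\Gamma C^{T} + B\Gamma D^{T} = 0 \qquad \text{for every CM } \Gamma.
\]
Since the set of bona-fide CMs has nonempty interior in the cone of positive symmetric matrices, taking affine combinations and differences extends this identity to \emph{all} symmetric $\Gamma$.

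Next I would substitute $\Gamma = vv^{T}$ for an arbitrary $v \in \reals^{2n}$, obtaining the rank-$1$ identity
\[
(Av)(Cv)^{T} = -(Bv)(Dv)^{T}.
\]
For generic $v$ both sides are nonzero rank-$1$ matrices, which forces $Av \parallel Bv$ and $Cv \parallel Dv$. Writing $Bv = \lambda(v) Av$ and $Dv = \mu(v) Cv$, the identity further imposes $\lambda(v)\mu(v) = -1$. The key linear-algebra fact is that any linear map $T$ satisfying $Tv \parallel v$ on all nonzero $v$ must be a scalar multiple of the identity; applying this to $T = BA^{-1}$ in the generic case where $A$ is invertible gives $B = \alpha A$ for a single constant $\alpha$, and dually $D = \gamma C$ with $\alpha\gamma = -1$, i.e. $\gamma = -1/\alpha$.

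With the blocks proportional, I would invoke the symplecticity $S\,\sigma_{4n}\,S^{T} = \sigma_{4n}$. Its diagonal blocks give $(1+\alpha^{2})A\sigma A^{T} = \sigma$ and $(1+\alpha^{-2})C\sigma C^{T} = \sigma$, while the off-diagonal block condition $A\sigma C^{T} + B\sigma D^{T}=0$ is automatic from $B=\alpha A$ and $D=-C/\alpha$. Defining $X := \sqrt{1+\alpha^{2}}\, A$ and $Y := -\sqrt{1+\alpha^{2}}\, C/\alpha$ yields $X,Y \in Sp(2n,\reals)$, and factoring out the local block $\mathrm{diag}(X,Y)$ produces exactly the first normal form of \eq{eq:main}. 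The second (block anti-diagonal) normal form arises from the degenerate branch of the case analysis where $Av$ vanishes on a nontrivial subspace; composing the original argument with the swap symplectic leads to the analogous decomposition with parameter $\gamma$.

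The main obstacle will be the degenerate case. Throughout the generic argument I implicitly assume $A$ and $C$ are invertible, so that the pointwise proportionality $Bv = \lambda(v) Av$ can be globalized via the ``$Tv \parallel v$'' principle. When $A$ or $C$ drops rank, I would need to show that $\ker A$ is $S$-compatible in the right sense, either forcing block-diagonal structure (the $\alpha = 0$ case of the first form) or forcing a block anti-diagonal structure (the second form). This kernel bookkeeping is where the proof becomes technical; the rest reduces to polarization of the rank-$1$ identity and a symplectic rescaling.
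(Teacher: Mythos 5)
Your argument is correct in substance but reaches the block proportionality by a genuinely different route than the paper. The paper first proves an auxiliary result (Lemma~\ref{l:lppm}): it vectorizes the condition $A\Gamma C^{T}+B\Gamma D^{T}=0$ into $(A\otimes C+B\otimes D)P_{+}=0$ with $P_{+}$ the projector onto the symmetric subspace, upgrades this to the full tensor identity $A\otimes C=-B\otimes D$ via a Jordan-normal-form argument on $A^{T}\otimes\iden+\iden\otimes D^{T}$, and then extracts the scalars by hitting the tensor identity with $A^{-1}\otimes\iden$ and $\iden\otimes D^{-1}$ and taking partial traces. Your substitution $\Gamma=vv^{T}$ is exactly the action of that operator on symmetric product vectors, but you bypass the Jordan-form step entirely: the rank-one identity $(Av)(Cv)^{T}=-(Bv)(Dv)^{T}$ together with the elementary fact that a linear map fixing every line is a scalar gives $BA^{-1}=\alpha\iden$ directly. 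This is shorter and more elementary in the invertible case; what the paper's stronger tensor identity buys (validity for all matrices, not just symmetric ones) is not actually needed for the final classification. Both arguments then finish identically with the symplectic constraint and the rescalings $X=\sqrt{1+\alpha^{2}}\,A$, $Y=-\sqrt{1+\alpha^{2}}\,C/\alpha$.

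Two caveats. First, the step you defer --- the dichotomy ``all four blocks invertible, or $S$ is block (anti-)diagonal'' --- is not a footnote: in the paper it is the content of Lemma~\ref{l:lppm}(i), proved via the observation that the symmetrization of a nonzero product vector never vanishes. Your rank-one machinery does close it, so the gap is real but benign: polarize to $\Gamma=vw^{T}+wv^{T}$ and multiply the resulting identity on the left by $a^{T}$ with $a^{T}A=0$ to get $\langle B^{T}a,v\rangle\,Dw+\langle B^{T}a,w\rangle\,Dv=0$ for all $v,w$; this forces either $D=0$ or $B^{T}a=0$, and in the latter case $(a^{T},0)S=0$ contradicts invertibility of $S$. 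Second, the two lines of Eq.~\eqref{eq:main} are not alternative branches of a case distinction but two parametrizations of the \emph{same} $S$ (with $\gamma=1/\alpha$; $\alpha=0$ is the local case and $\alpha=\pm\infty$ the swap), so the block anti-diagonal form should be obtained by reparametrizing your first normal form rather than from the degenerate branch alone. Neither point invalidates your plan.
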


Thus the only non-trivial linear transformation in Eq.~\eqref{eq:condition_S} is of the form of Eq.~\eqref{eq:main}. We discard the trivial operations and set 
$\alpha=\tan\theta$ in Eq.~\eqref{eq:main} 

\begin{equation}
 S_{\theta}=\begin{pmatrix}{} \cos\theta\iden_{2n} & -\sin\theta\iden_{2n}\\  \sin\theta\iden_{2n} & \cos\theta\iden_{2n} \end{pmatrix}, \qquad \theta\neq m\pi/2,\quad m=0,1,2,\ldots,
\label{eq:bstheta}
\end{equation}
which is the symplectic transformation associated to a ($n-$mode) beam splitter operation. We refer to the latter operation as a \textit{non-trivial beam splitter} transformation. \\

Although for every covariance matrix $\varGamma$, $S_{\theta}(\varGamma\oplus\varGamma)S_{\theta}^{T}=(\varGamma\oplus\varGamma)$, it turns out from the DS theorem that there does not exist a non-Gaussian state $\rho$ such that $U_{S_{\theta}}(\rho\otimes\rho)U^{*}_{S_{\theta}}=\rho\otimes\rho.$ See figure~\ref{fig:operGBS}.

\begin{theorem}[\textsc{Quantum Darmois-Skitovich}]
Let $U_{S}$ be the unitary operation corresponding to a non-trivial beam spliter Eq. \ref{eq:bstheta}. Consider the state $\rho_{ab}=U_{S}(\rho_{1}\otimes\rho_{2})U_{S}^{*}$ 
obtained after the unitary evolution of an arbitrary product state. If the output state is a product state $\rho_{ab}=\rho_{a}\otimes\rho_{b}$, then 
$\rho_{1}$ and $\rho_{2}$ are Gaussian bosonic states with the same CM but not necessarily same displacement vector.
\label{teo:ids}
\end{theorem}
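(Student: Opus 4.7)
The plan is to reduce the quantum statement to the classical Darmois--Skitovich (DS) theorem by slicing the joint characteristic function along arbitrary one-dimensional directions in phase space, and then to lift the resulting one-dimensional Gaussianity to a multivariate Gaussianity of $\chi_{\rho_{1}}$ and $\chi_{\rho_{2}}$. The starting point is the symplectic covariance Eq.~\eqref{eq:cov}: combining it with $\chi_{\rho_{1}\otimes\rho_{2}}(\xi_{1},\xi_{2})=\chi_{\rho_{1}}(\xi_{1})\chi_{\rho_{2}}(\xi_{2})$ and the hypothesis $\rho_{ab}=\rho_{a}\otimes\rho_{b}$, one obtains the functional equation
\begin{equation*}
\chi_{\rho_{1}}(\cos\theta\,\xi_{1}+\sin\theta\,\xi_{2})\,\chi_{\rho_{2}}(-\sin\theta\,\xi_{1}+\cos\theta\,\xi_{2})=\chi_{\rho_{a}}(\xi_{1})\,\chi_{\rho_{b}}(\xi_{2}),\qquad \xi_{1},\xi_{2}\in\reals^{2n}.
\end{equation*}

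Next, I would fix an arbitrary direction $v\in\reals^{2n}$ and substitute $\xi_{1}=tv,\,\xi_{2}=sv$. By Corollary \ref{cor:marginals} the maps $\phi_{j}^{v}(t):=\chi_{\rho_{j}}(tv)$ and $\phi_{a}^{v},\phi_{b}^{v}$ are genuine classical characteristic functions, so the sliced identity
\begin{equation*}
\phi_{1}^{v}(\cos\theta\,t+\sin\theta\,s)\,\phi_{2}^{v}(-\sin\theta\,t+\cos\theta\,s)=\phi_{a}^{v}(t)\,\phi_{b}^{v}(s)
\end{equation*}
is exactly the hypothesis of the classical DS theorem (Theorem \ref{teo:ds}) applied to two independent real random variables with characteristic functions $\phi_{1}^{v},\phi_{2}^{v}$ and to two linear forms with coefficients $(\cos\theta,-\sin\theta)$ and $(\sin\theta,\cos\theta)$, all nonzero since $\theta\neq m\pi/2$. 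Hence $\phi_{1}^{v}$ and $\phi_{2}^{v}$ are univariate Gaussian characteristic functions. Writing $\phi_{j}^{v}(t)=\exp(itm_{j}(v)-\tfrac{1}{2}t^{2}\sigma_{j}^{2}(v))$ and equating the exponents in the displayed identity, the $ts$ cross-term is forced to vanish, which yields $\sigma_{1}^{2}(v)=\sigma_{2}^{2}(v)$ for every $v$.

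Finally, I would lift this slice-wise Gaussianity back to $\reals^{2n}$. Differentiating $\phi_{j}^{v}$ twice at $t=0$ shows that $\rho_{j}$ has finite first and second moments in every direction, so its displacement $d_{j}$ and covariance matrix $\varGamma_{j}$ exist; computing these moments directly from $\chi_{\rho_{j}}$ yields $m_{j}(v)=v\cdot d_{j}$ and $\sigma_{j}^{2}(v)=\tfrac{1}{2}v^{T}\varGamma_{j}v$, respectively linear and quadratic in $v$. Since two characteristic functions agreeing on every line through the origin coincide, we conclude
\begin{equation*}
\chi_{\rho_{j}}(\xi)=\exp\bigl(-\tfrac{1}{4}\xi^{T}\varGamma_{j}\xi+i\xi\cdot d_{j}\bigr),\qquad j=1,2,
\end{equation*}
so each $\rho_{j}$ is a Gaussian bosonic state. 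The equality $\sigma_{1}^{2}(v)=\sigma_{2}^{2}(v)$ for all $v$ then gives $\varGamma_{1}=\varGamma_{2}$ by polarization, while the displacements $d_{1},d_{2}$ remain unconstrained. I expect the main obstacle to be precisely this last lifting step: the classical DS theorem is a one-variable statement, so turning \emph{Gaussian on every line} into genuine multivariate Gaussianity requires checking that $\sigma_{j}^{2}(v)$ and $m_{j}(v)$ really assemble into an honest quadratic and linear form on $\reals^{2n}$. This is the only place where the quantum setting does nontrivial work beyond the classical DS theorem.
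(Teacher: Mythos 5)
Your proposal is correct and follows essentially the same route as the paper: pass to characteristic functions via the symplectic covariance \eqref{eq:cov}, slice along a single phase-space direction $\xi_{1}=tv,\ \xi_{2}=sv$ so that Corollary \ref{cor:marginals} turns the identity into the functional form of the classical DS theorem, conclude one-dimensional Gaussianity with equal variances, and recover the multivariate statement by evaluating at $t=1$. The only cosmetic difference is that you keep $\chi_{a},\chi_{b}$ abstract (recovering them by setting $s=0$ or $t=0$) and spell out the lifting step ($m_{j}(v)$ linear, $\sigma_{j}^{2}(v)$ quadratic in $v$) that the paper handles implicitly by differentiating the characteristic function.
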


Due to the 1--1 correspondence between quantum states and their characteristic function we have the following consequence.
\begin{corollary}
 Let $\chi_{1}$ and $\chi_{2}$ be the characteristic function of the quantum states $\rho_{1}$ and $\rho_{2}$ respectively,
 which have finite second moments. If for a fixed $\theta\neq m\pi/2,\quad m=0,1,2,\ldots$ the characteristic functions satisfy the functional equation
\begin{equation}
\chi_{1}(\cos\theta\xi_{1}+\sin\theta\xi_{2})\chi_{2}(\cos\theta\xi_{2}-\sin\theta\xi_{1})=\chi_{1}(\cos\theta\xi_{1})\chi_{1}(\sin\theta\xi_{2})\chi_{2}(\cos\theta\xi_{2})\chi_{2}(-\sin\theta\xi_{1}),
\label{eq:fegs}
\end{equation}
for all $\xi_{1},\xi_{2}\in\mathbb{R}^{2n}$. then $\rho_{1}$ and $\rho_{2}$ are Gaussian bosonic states with the same CM but not necessarily same displacement vector.
\end{corollary}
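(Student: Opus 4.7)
The plan is to recognize the functional equation as precisely the statement that the characteristic function of $\rho_{ab}:=U_{S_\theta}(\rho_1\otimes\rho_2)U_{S_\theta}^{*}$ factorizes across the two $n$-mode subsystems, and then invoke the quantum DS theorem (Theorem \ref{teo:ids}) directly.

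First I would apply the covariance relation \eqref{eq:cov} to the beam-splitter symplectic $S_\theta$ of \eqref{eq:bstheta}. Writing $\xi=(\xi_1,\xi_2)\in\reals^{4n}$ and using that $\chi_{\rho_1\otimes\rho_2}(\eta_1,\eta_2)=\chi_1(\eta_1)\chi_2(\eta_2)$, a short computation of $S_\theta^{T}\xi$ yields
\begin{equation*}
\chi_{\rho_{ab}}(\xi_1,\xi_2)=\chi_1(\cos\theta\,\xi_1+\sin\theta\,\xi_2)\,\chi_2(\cos\theta\,\xi_2-\sin\theta\,\xi_1),
\end{equation*}
which is exactly the left-hand side of the hypothesized functional equation \eqref{eq:fegs}.

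Next I would read off the candidate factorization. Setting $\xi_2=0$ and $\xi_1=0$ in the above identity produces the characteristic functions of the two reduced states of $\rho_{ab}$: $\chi_{\rho_a}(\xi_1):=\chi_{\rho_{ab}}(\xi_1,0)=\chi_1(\cos\theta\,\xi_1)\chi_2(-\sin\theta\,\xi_1)$ and $\chi_{\rho_b}(\xi_2):=\chi_{\rho_{ab}}(0,\xi_2)=\chi_1(\sin\theta\,\xi_2)\chi_2(\cos\theta\,\xi_2)$. Since these are marginals of the legitimate quantum state $\rho_{ab}$, they are bona-fide quantum characteristic functions (no appeal to Quantum Bochner--Khinchin is needed here). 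The right-hand side of \eqref{eq:fegs} is then exactly $\chi_{\rho_a}(\xi_1)\chi_{\rho_b}(\xi_2)=\chi_{\rho_a\otimes\rho_b}(\xi_1,\xi_2)$.

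Combining the two identifications, the hypothesis \eqref{eq:fegs} says $\chi_{\rho_{ab}}=\chi_{\rho_a\otimes\rho_b}$, so by the one-to-one correspondence between quantum states and their characteristic functions $\rho_{ab}=\rho_a\otimes\rho_b$. Theorem \ref{teo:ids} then gives that $\rho_1$ and $\rho_2$ are Gaussian bosonic states with a common covariance matrix, which is the desired conclusion. There is no substantive obstacle: the only care needed is the bookkeeping of the transpose $S_\theta^{T}$ so that the arguments of $\chi_1$ and $\chi_2$ match those in \eqref{eq:fegs}; the corollary is really a phase-space restatement of Theorem \ref{teo:ids}.
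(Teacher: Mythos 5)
Your proposal is correct and is essentially the paper's own route: the paper states this corollary as an immediate consequence of Theorem \ref{teo:ids} via the one-to-one correspondence between states and characteristic functions, and your argument simply spells out that correspondence (identifying the left-hand side of \eqref{eq:fegs} as $\chi_{\rho_{ab}}$ through \eqref{eq:cov} and the right-hand side as the characteristic function of the product of the marginals, then invoking injectivity of the Weyl transform). The bookkeeping of $S_{\theta}^{T}$ and the identification of the marginals are both handled correctly.
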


An inmmediate proof of the quantum DS theorem can be obtained from its corresponding classical result and corollary \ref{cor:marginals}. We 
recall that the latter corollary tells us that we always obtain a positive Wigner function (a classical probability distribution) from the quantum characteristic function
whenever we move through a fixed direction in phase space.

\begin{proof}[Proof Theorem \ref{teo:ids}]
 Using Eq.~\eqref{eq:cov} the evolution of the input states can be expressed in terms of characteristic functions as Eq.~\eqref{eq:fegs}. We fixed the direction 
 $\xi_{1}=\xi_{2}=\xi$ and parametrize $\xi_{1}=t\xi$, $\xi_{2}=s\xi$ with $t,s\in\reals$. Moreover, we introduce the classical characteristic functions 
 $\chi_{j}(u):=\chi_{j}(u\xi), j=1,2$ with $u\in\reals$ so that Eq.~\eqref{eq:fegs} reads
 \begin{equation*}
  \chi_{1}(\cos\theta t+\sin\theta s)\chi_{2}(\cos\theta s-\sin\theta t)=\chi_{1}(\cos\theta t)\chi_{1}(\sin\theta s)\chi_{2}(\cos\theta s)\chi_{2}(-\sin\theta t).
 \end{equation*}
This last equation is the functional version of the classical DS theorem (c.f. Eq. 8.7 of section XV.9 in Ref.~\onlinecite{feller-vol-2}). From this classical result it follows that $\chi_{1}$ and $\chi_{2}$ 
are one dimensional Gaussian characteristic functions with the same variance. We can compute the moments by taking derivates of the characteristic function to obtain that 
\begin{equation*}
 \chi_{j}(t\xi)=\exp[-\frac{t^{2}}{4}(\xi\cdot\varGamma\xi)+it\xi\cdot d_{j}]\qquad j=1,2.
\end{equation*}
The result follows with $t=1$.
\end{proof}


\subsection{Stability}\label{stabilitysubsection}

In the last section we presented an exact version of the DS theorem which brings naturally an operational characterization of Gaussian bosonic states. There, it is assumed an exact factorizability of the output state. In practice, it is impossible to assure such thesis since there are always errors in the measurements and therefore the validity of the result is not completely clear in real life. Moreover, any practical application can be inmmediatealy rule out if the conclusion is not robust against small changes in the defined conditions. \\

We are interested in finding to which extent the results of theorem \ref{teo:ids} are affected if the main assumption is not exact but approximately satisfied. Before specifying the conditions of the stability of the quantum DS theorem, we briefly comment on the respective classical stability problem\cite{Lu77}. \\

The stability of the classical DS theorem is due to Yu. R. Gabovich \cite{Ga74}. In his work, the word ``approximately" is quantified in terms of the closeness of the cumulative distribution functions of the random variables. It is shown (Theorem 3 in Ref.~\onlinecite{Ga74}) that for approximate independency the considered classical probability distributions are $c(\log\log(1/\varepsilon))^{-1/8}$-close to a normal distribution in the Levy metric. In this estimate there is no explicit value of the constant $c$ and therefore it is not known how it depends on the coefficients of the linear forms, neither on the number of random variables involved. \\

At the level of density operators it was proven\cite{TopD66} that weak operator topology is equivalent to trace-norm topology.
Therefore Gabovich result and corollary \ref{cor:marginals} should imply at least in a qualitative manner that the quantum DS theorem is stable. However, we can obtain a better concrete estimate of the stability of the DS theorem by considering not just the marginals and the classical result, but rather using the entire phase space and the natural restrictions on the quantum characteristic functions. Namely, due to Heisenberg's uncertainty principle there cannot be characteristic functions which are highly concentrated in some region of phase space. This naturally rules out ill-behaved distributions that can appear in the classical case. With no more preambules, we give a precise statement of our result. \\ \\

We say that a quantum state $\rho_{ab}$ is an $\varepsilon-$\textit{approximate product state} if there is a product state $\rho_{a}\otimes\rho_{b}$ such that
 
\begin{equation}
\norm{\rho_{ab}-\rho_{a}\otimes\rho_{b}}_{1}\leq\varepsilon,
\label{eq:appsep}
\end{equation}
Suppose that two independent states evolve according to the action of a beam splitter, but this time the output state is an $\varepsilon-$approximate product state (see Fig.~\ref{fig:inesGBS}). 

\begin{figure}[h]
 \includegraphics[scale=1]{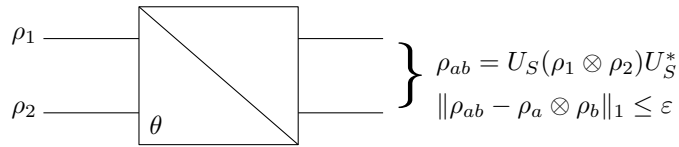}
 \caption{Stability of the Darmois-Skitovich theorem. In this case we consider the output state to be an $\varepsilon-$approximate product state.}
 \label{fig:inesGBS}
 \end{figure}

The following theorem describes the robustness of the quantum DS theorem.

\begin{theorem}[Stability DS]\label{teo:pids}
Let $U_{S}$ be the unitary operation corresponding to a non-trivial beam spliter charaterized in Eq.~\eqref{eq:bstheta}
and $\rho_{1},\rho_{2}$ density operators of two $n$-mode systems with finite moments of all orders and whose CMs are $\varGamma_{1}$ and $\varGamma_{2}$. Consider the output state $\rho_{ab}=U_{S}(\rho_{1}\otimes\rho_{2})U_{S}^{*}$ and define by $\rho'_{1},\rho_2'$ Gaussian bosonic states that have the same CMs and displacement vectors as $\rho_{1},\rho_{2}$.  
If for sufficiently small $\varepsilon\in(0,1)$ the output state $\rho_{ab}$ is $\varepsilon$-close to a product state in trace norm, then 

\begin{equation}\label{eq:boundstabilDS}
 \norm{\rho_{j}-\rho'_{j}}_{2}\leq c_{1}\varepsilon^{1/3}+\frac{c_{2}}{\sqrt{\log(1/\varepsilon)}}, \qquad j=1,2,
\end{equation}
\begin{equation}
 \norm{\varGamma_{1}-\varGamma_{2}}_{2}\leq c_{3}\varepsilon^{1/2},
\end{equation}
where the constants $c_{1},c_{2}$ and $c_{3}$ (which are made explicit in the proof) depend on the transmitivity $\theta$ of the beam-splitter, the number of modes $n$ and the second and fourth absolute moments of the output state $\rho_{ab}$. 
\end{theorem}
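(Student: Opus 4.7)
My plan is to work entirely in phase space via characteristic functions and exploit the covariance formula \eqref{eq:cov}. Writing $\chi_j$, $\chi_{ab}$, $\chi_a$, $\chi_b$ for the characteristic functions of $\rho_j$, $\rho_{ab}$, $\rho_a$, $\rho_b$, the output relation reads
\begin{equation*}
\chi_{ab}(\xi_1,\xi_2)=\chi_1(\cos\theta\,\xi_1+\sin\theta\,\xi_2)\,\chi_2(-\sin\theta\,\xi_1+\cos\theta\,\xi_2).
\end{equation*}
The quantum Parseval relation (Theorem~\ref{th:quantumparseval}) combined with $\|\cdot\|_2\leq\|\cdot\|_1$ converts the hypothesis $\|\rho_{ab}-\rho_a\otimes\rho_b\|_1\leq\varepsilon$ into an $L^2$-bound $\|\chi_{ab}-\chi_a\chi_b\|_{L^2(\reals^{4n})}\lesssim\varepsilon$. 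Because $\rho_{1},\rho_{2}$ have finite moments of all orders (Schwartz density operators), their characteristic functions are smooth and the derivatives $\partial^\alpha\chi_j$ are uniformly bounded by absolute moments of order $|\alpha|$. Invoking a Sobolev/Gagliardo–Nirenberg interpolation between $L^2$ control and derivative bounds promotes the $L^2$-bound to a uniform bound on any ball of radius $R$ of the form $|\chi_{ab}-\chi_a\chi_b|\lesssim C(R,n,\theta)\,\varepsilon^{1/3}$; the exponent $1/3$ is the natural one when interpolating between $L^{2}$ and a first-order derivative norm in any fixed dimension.

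Setting $\xi_2=0$ and then $\xi_1=0$ in this pointwise estimate isolates $\chi_a(\xi_1)\approx\chi_1(\cos\theta\,\xi_1)\chi_2(-\sin\theta\,\xi_1)$ and $\chi_b(\xi_2)\approx\chi_1(\sin\theta\,\xi_2)\chi_2(\cos\theta\,\xi_2)$. Substituting back yields the approximate quantum DS functional equation
\begin{equation*}
\chi_1(\cos\theta\,\xi_1+\sin\theta\,\xi_2)\,\chi_2(\cos\theta\,\xi_2-\sin\theta\,\xi_1)=\chi_1(\cos\theta\,\xi_1)\chi_1(\sin\theta\,\xi_2)\chi_2(\cos\theta\,\xi_2)\chi_2(-\sin\theta\,\xi_1)+O(\varepsilon^{1/3}),
\end{equation*}
valid on a ball of radius $R$. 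On a small neighborhood of the origin, $\chi_1,\chi_2$ are bounded away from zero (their modulus at zero is $1$ and their first derivatives are controlled by the first moments), so we may take logarithms. Writing $f_j=\log\chi_j$ and expanding in the variables $\xi_1,\xi_2$ separately, the equation rearranges into an approximate Cauchy-type relation of the form $F(\xi_1,\xi_2)=G(\xi_1)+H(\xi_2)+O(\varepsilon^{1/3})$, where the unknowns are forced to be quadratic polynomials up to controlled error; here one uses a Hyers–Ulam style stability of the Cauchy equation combined with the classical fact (going back to the exact DS proof) that such a separation forces each $f_j$ to be approximately a quadratic form, i.e.\ $\chi_j\approx\chi_{\rho'_j}$ pointwise on this ball.

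To extract the CM matching, I differentiate the approximate functional equation twice at the origin. Because fourth moments are finite, Taylor's theorem controls the remainder, and the second derivatives of both sides must agree up to $O(\varepsilon^{1/2})$; balancing orders of Taylor truncation with the pointwise $O(\varepsilon^{1/3})$ error on a shrinking ball gives the exponent $1/2$ in $\|\varGamma_1-\varGamma_2\|_2\leq c_3\varepsilon^{1/2}$. Finally, to obtain the HS bound \eqref{eq:boundstabilDS} I convert the pointwise estimate $|\chi_j(\xi)-\chi_{\rho'_j}(\xi)|\lesssim\varepsilon^{1/3}$ on the ball of radius $R$ into an $L^2$ estimate; outside the ball I use that $\chi_{\rho'_j}$ has Gaussian decay and that $\chi_j$ decays mildly (e.g.\ via $|\xi|^{-2}$ bounds from the second moment). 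Optimizing $R$ so the interior error $\sim R^{2n}\varepsilon^{2/3}$ balances against a tail of order $\exp(-cR^2)$ produces the $1/\sqrt{\log(1/\varepsilon)}$ term, and Parseval converts the resulting $L^2$-bound into the HS-bound on $\rho_j-\rho'_j$.

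The main obstacle, and the place where most of the care and explicit constants will be spent, is Step 3: obtaining quantitative Hyers–Ulam stability for the quantum DS functional equation with constants that depend explicitly on $\theta$, $n$, and the fourth absolute moments of $\rho_{ab}$, while simultaneously propagating the logarithm-based manipulation to the largest possible neighborhood of the origin (since the size of this neighborhood, together with the tail decay of $\chi_{\rho'_j}$, dictates the $\log$-exponent in the final bound). The second delicate point is the interpolation argument that upgrades $L^2$-closeness of characteristic functions to pointwise closeness with explicit prefactors in $n$ and the moments — this fixes the $\varepsilon^{1/3}$ exponent and feeds directly into every subsequent constant.
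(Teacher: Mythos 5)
Your overall architecture (Parseval, pointwise functional equation, logarithm plus Cauchy-equation stability near the origin, moment-controlled tails, CM matching from second derivatives) matches the paper's, but two of your key steps do not hold up as described. First, the interpolation step is both unnecessary and misattributed: since $G(\eta_1,\eta_2)=\Tr[(W_{\eta_1}\otimes W_{\eta_2})\,g]$ with $g=\rho_{ab}-\rho_a\otimes\rho_b$, H\"older's inequality gives the \emph{sup-norm} bound $|G|\leq\|g\|_1\leq 3\varepsilon$ directly (after replacing the approximating product by the product of the marginals, Lemma~\ref{lemma:closetraces}); no Sobolev/Gagliardo--Nirenberg upgrade from $L^2$ is needed, and it would in any case yield a weaker $O(\varepsilon^{1/3})$ where $O(\varepsilon)$ is free. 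The exponent $1/3$ in the final bound does not come from interpolation: it arises because after taking logarithms one divides the $O(\varepsilon)$ inhomogeneity by products of up to four factors of $|\chi_j|$, each bounded below only by a quantity of order $\varepsilon^{1/12}$, producing an error of order $\varepsilon/\varepsilon^{4/12}=\varepsilon^{2/3}$ in the squared $L^2$ estimate.

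Second, and more seriously, you have not supplied the mechanism for the step you yourself identify as the main obstacle. A fixed small neighborhood of the origin where $|\chi_j|\geq 1/2$ (which is all that first- and second-moment bounds give you) is not enough: the complement of a \emph{fixed} ball contributes a tail integral that does not vanish as $\varepsilon\to0$, so no stability estimate results. The paper's Lemma~\ref{lemma:region} (following Gabovich) uses the functional equation itself, specialized along $\eta_2=\tan\theta\,\eta_1$, to derive the recursion $\gamma\bigl((1+\tan^2\theta)^{k+1}\xi\bigr)\geq\gamma^4\bigl((1+\tan^2\theta)^{k}\xi\bigr)-3\varepsilon$ and iterate it $k_0\sim\log\log(1/\varepsilon)$ times, yielding $|\chi_j|\geq 12\,\varepsilon^{1/12}$ on a ball of radius $r\sim\sqrt{\lambda^{-1}\log(1/\varepsilon)}$. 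This growing radius is what makes the tail term $O(1/r^2)=O(1/\log(1/\varepsilon))$ via a Chebyshev-type second-moment bound (implemented with the parity-operator representation of the Wigner function and the Schwartz-operator calculus); $r$ is therefore \emph{not} a free parameter to optimize against Gaussian decay of $\Phi$, and the interior integral does not grow like $R^{2n}$ because the integrand carries the weight $|\chi_1(\xi)|^2\|\xi\|_2^4$, which is integrable over all of $\reals^{2n}$ by the fourth-moment hypothesis. Without the iteration lemma and the explicit trace estimates on the inhomogeneities $Q_{12},Q_{22}$ (which are where the dependence on $\theta$, $n$ and $\kappa$ actually enters), your outline cannot produce the claimed bounds.
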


Note that under the conditions of theorem~\ref{teo:pids} we also have that the respective Wigner functions are close with the bound of Eq.~\eqref{eq:boundstabilDS}. Furthermore, the stability result could be in principle be extended to a larger set of quantum states since Schwartz operators are dense in the set of trace class operators (see Lemma 2.5 in \cite{KKW15}).

{\renewcommand\addcontentsline[3]{} \subsubsection*{Schwartz operators}} \label{stabilitysubsection}
 
We begin this section by reviewing some important facts about Schwartz operators which are the non-commutative analogue of Schwartz functions. The latter are infinite differentiable functions whose derivatives decay faster than any polynomial at infinity. The introduction of Schwartz operators allows us to handle differentiability and boundness problems in an elegant manner  and it is for this reason that they play an important technical role in this paper. This class of operators was
first introduced in Ref. \onlinecite{KKW15}, and the reader may find there a detailed exposition. \\

In our proof of the stability of the DS theorem, we deal with terms of the form $\Tr[R_{k}R_{l}\rho R_{s}R_{r}]$ which are a priori not necessarily well-defined on any dense domain of $\rho$; it may happen that $\rho$ maps outside the domain of $R_{l}$. This is a common issue when dealing with unbounded operators. There is an inmense advantage when working with Schwartz operators since many regular properties for bounded operators become available for unbounded ones (\eg ``cycling under the trace" property holds). Indeed, Theorem~\ref{theo:schwprop} below plays a decisive role in the calculation of the constants appearing in our proof of the stability of DS theorem.\\

\begin{definition}[Schwartz Operators]\label{def:schwarzop} An operator $T\in\mathcal{B}(\mathcal{H})$ is called a Schwartz operator if
\begin{equation*}
 \norm{P^{\alpha}Q^{\beta}T P^{\alpha'}Q^{\beta'}}_{1}<\infty \qquad \quad \text{for all }\alpha,\alpha',\beta,\beta'\in I_{n},
\end{equation*}
where $I_{n}:=\{\alpha=(\alpha_{1},\ldots,\alpha_{n})|\alpha_{i}\in\mathbb{N}\cup\{0\}\text{ for all }i=1,\ldots,n\}$ is the set of multi-indices, and 
\begin{equation}
 Q^{\alpha}=Q^{\alpha_{1}}_{1}\cdots Q^{\alpha_{n}}_{n}, \qquad \quad  P^{\alpha}=P^{\alpha_{1}}_{1}\cdots P^{\alpha_{n}}_{n}.
\end{equation}
The set of all Schwartz operators will be denoted by $\mathfrak{S}(\mathcal{H})$.
\end{definition}

So for Schwartz-density operators all the statistical moments in $Q$ and $P$ exist and are finite. We denote by 
\begin{equation*}
\mathscr{S}(\mathcal{H}):=\{\rho\in \mathfrak{S}(\mathcal{H})\ |\ \rho \text{ is a density operator}\ \}
\end{equation*}
the space of density operators which are also Schwartz operators. 
 For $\rho\in\mathscr{S}(\mathcal{H})$ we have the following neat characterization

\begin{proposition}\label{prop:schwfunc}
Let $T$ be a Hilbert-Schmidt operator. Then T is a Schwartz operator if and only if the respective Weyl transform is a Schwartz function.
\end{proposition}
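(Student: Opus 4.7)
The proof hinges on the intertwining between the Weyl transform and multiplication by canonical operators. Using $W_{\xi} R_j W_{-\xi} = R_j + \xi_j$ (a direct consequence of the CCR) together with differentiation under the integral and cyclicity of the trace, one derives first-order identities of the schematic form $\chi_{R_k T}(\xi) = D_k \chi_T(\xi)$ and $\chi_{T R_k}(\xi) = \tilde D_k \chi_T(\xi)$, where $D_k,\tilde D_k$ are first-order differential operators in $\xi$ with polynomial coefficients of degree one. Iterating these identities one obtains, for every pair of multi-indices $(\alpha,\beta)$: (i) an expression of $\chi_{R^{\alpha} T R^{\beta}}$ as a polynomial differential operator of total order $|\alpha|+|\beta|$ applied to $\chi_T$, and (ii) an expression of each monomial-times-derivative $\xi^{\gamma}\partial_{\xi}^{\delta}\chi_T$ as a finite sum of characteristic functions $\chi_{P(R) T Q(R)}$ of polynomial-dressed operators.

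For the direction $\mathfrak{S}(\mathcal{H}) \Rightarrow \mathcal{S}(\reals^{2n})$, if $T \in \mathfrak{S}(\mathcal{H})$ then each $P(R) T Q(R)$ is trace class by Definition~\ref{def:schwarzop}, and H\"older's inequality gives $|\chi_{P(R) T Q(R)}(\xi)| \leq \|P(R) T Q(R)\|_1$ uniformly in $\xi$. Consequently every Schwartz seminorm $\sup_{\xi}|\xi^{\gamma}\partial_{\xi}^{\delta}\chi_T(\xi)|$ is finite, so $\chi_T \in \mathcal{S}(\reals^{2n})$. For the converse, Schwartzness of $\chi_T$ combined with (i) shows that $\chi_{R^{\alpha} T R^{\beta}} \in \mathcal{S}(\reals^{2n}) \subset L^{2}(\reals^{2n})$ for every $\alpha,\beta$, and the quantum Parseval relation (Theorem~\ref{th:quantumparseval}) promotes this to the statement that $R^{\alpha} T R^{\beta}$ is Hilbert--Schmidt for every $\alpha,\beta$.

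The remaining and delicate step is to upgrade Hilbert--Schmidt membership of all $R^{\alpha} T R^{\beta}$ to trace-class membership, which is what Definition~\ref{def:schwarzop} actually requires. For this I introduce the harmonic-oscillator number operator $N := \tfrac{1}{2}\sum_{k=1}^{n}(Q_k^2 + P_k^2)$; then $(1+N)^{-s}$ is trace class for $s>n$, and every monomial $R^{\alpha}$ is dominated in operator sense by a polynomial in $N$. Writing
\begin{equation*}
R^{\alpha} T R^{\beta} \;=\; (1+N)^{-s}\,\bigl[(1+N)^{s} R^{\alpha} T R^{\beta} (1+N)^{s'}\bigr]\,(1+N)^{-s'},
\end{equation*}
the bracketed factor is Hilbert--Schmidt (hence bounded) by the same Parseval argument applied to a sufficiently larger multi-index pair, so the product is (trace class)$\cdot$(bounded)$\cdot$(trace class), hence trace class. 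A short reordering argument using $[R_j,R_k]=i\sigma_{jk}$ finally converts the $R^{\alpha}TR^{\beta}$ estimates into the mixed $P^{\alpha}Q^{\beta}TP^{\alpha'}Q^{\beta'}$ estimates of Definition~\ref{def:schwarzop}. The number-operator weighting is the main obstacle of the argument: Parseval alone only yields Hilbert--Schmidt information, and one has to carefully track both the c-number corrections generated when commuting $R^{\gamma}$ past $W_{\xi}$ and the combinatorics that bound $R^{\alpha}$ by a power of $N$.
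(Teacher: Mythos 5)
Your proposal cannot be compared against the paper's own proof for the simple reason that the paper does not prove Proposition~\ref{prop:schwfunc}: it is imported without proof from Ref.~\onlinecite{KKW15}, to which the reader is referred. Taken on its own terms, your argument is a legitimate and essentially complete proof strategy, and its skeleton is consistent with the tools the paper does develop: the first-order identities you invoke are exactly Eqs.~\eqref{eq:dev1}--\eqref{eq:dev2} of Lemma~\ref{l:qcfmoments2}, and iterating them to trade monomials and derivatives in $\xi$ against left/right polynomial dressings of $T$ is the right mechanism in both directions. The forward direction is clean. In the converse, you correctly identify the real difficulty --- Parseval only delivers Hilbert--Schmidt membership, and the passage to trace class via the sandwich with $(1+N)^{-s}$ (with $s>n$ an integer so that $(1+N)^{s}$ is itself a polynomial in $R$) is the standard and correct remedy. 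Two points deserve tightening. First, in the converse direction the identity $\chi_{R^{\alpha}TR^{\beta}}=D^{(\alpha,\beta)}\chi_{T}$ presupposes that $R^{\alpha}TR^{\beta}$ already exists as a Hilbert--Schmidt operator; the honest order of the argument is to define $S_{\alpha\beta}$ as the inverse Weyl transform of the Schwartz function $D^{(\alpha,\beta)}\chi_{T}\in L^{2}(\reals^{2n})$ and then verify $\langle R^{\alpha}\phi,\,TR^{\beta}\psi\rangle=\langle\phi,\,S_{\alpha\beta}\psi\rangle$ on a dense domain of smooth vectors. Second, the factorization $R^{\alpha}TR^{\beta}=(1+N)^{-s}\bigl[(1+N)^{s}R^{\alpha}TR^{\beta}(1+N)^{s'}\bigr](1+N)^{-s'}$ is an identity of unbounded compositions and needs a short domain argument (the right cancellation is automatic since $(1+N)^{-s'}$ maps into the domain of $(1+N)^{s'}$; the left one requires knowing the range of the remaining product lies in the domain of $(1+N)^{s}$). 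Neither issue is a flaw in the approach --- they are the routine care that any unbounded-operator version of this statement requires, and they are handled in the cited reference.
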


\begin{corollary}\label{cor:schwtrace}
A density operator $\rho$ is a Schwartz operator if and only if its characteristic function $\chi$, or Wigner function $\mathcal{W}$, is a Schwartz function. Moreover, the partial trace of a Schwartz-density operator is a Schwartz operator.
\end{corollary}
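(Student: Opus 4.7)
The plan is to reduce both claims to Proposition~\ref{prop:schwfunc}, which already identifies Schwartz operators with Hilbert--Schmidt operators whose Weyl transform is a Schwartz function. Since every density operator is trace class, and hence Hilbert--Schmidt, the proposition applies directly.

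For the first assertion, I would observe that the characteristic function $\chi_{\rho}(\xi)=\Tr[W_{\xi}\rho]$ is precisely the inverse Weyl transform of $\rho$ in the sense of Theorem~\ref{th:quantumparseval} (up to the sign convention in $\xi\mapsto-\xi$, which is a Schwartz-space isomorphism). Proposition~\ref{prop:schwfunc} then gives $\rho\in\mathscr{S}(\mathcal{H})\iff\chi_{\rho}\in\mathcal{S}(\reals^{2n})$. The equivalence with $\mathcal{W}$ being Schwartz is immediate because the symplectic Fourier transform is a linear isomorphism of $\mathcal{S}(\reals^{2n})$ onto itself, and $\mathcal{W}$ is by definition the symplectic Fourier transform of $\chi_{\rho}$.

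For the second assertion, I would work on the level of characteristic functions. Writing the composite system with canonical operators $R=(R_{1},R_{2})$ as in~\eqref{eq:notationR1R2}, the Weyl operator factors as $W_{(\xi_{1},\xi_{2})}=W_{\xi_{1}}\otimes W_{\xi_{2}}$. Consequently, for any Schwartz density operator $\rho$ on $\mathcal{H}\otimes\mathcal{H}$,
\begin{equation*}
\chi_{\mathrm{tr}_{2}\rho}(\xi_{1})=\Tr\bigl[(W_{\xi_{1}}\otimes \iden)\rho\bigr]=\Tr\bigl[W_{(\xi_{1},0)}\rho\bigr]=\chi_{\rho}(\xi_{1},0).
\end{equation*}
By the first part, $\chi_{\rho}$ is a Schwartz function on $\reals^{4n}$, and the restriction of a Schwartz function to the affine subspace $\{\xi_{2}=0\}$ is again a Schwartz function on $\reals^{2n}$ (all partial derivatives decay faster than any polynomial, a property that is preserved under fixing some of the variables). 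Thus $\chi_{\mathrm{tr}_{2}\rho}\in\mathcal{S}(\reals^{2n})$, and another application of the first part, together with the fact that the partial trace of a density operator is a density operator, yields $\mathrm{tr}_{2}\rho\in\mathscr{S}(\mathcal{H})$.

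The only step that requires genuine input is the identification $W_{(\xi_{1},\xi_{2})}=W_{\xi_{1}}\otimes W_{\xi_{2}}$ and the partial-trace identity derived from it; everything else is immediate once Proposition~\ref{prop:schwfunc} is in hand. I do not expect any substantial obstacle: the main point is merely to see the partial trace as the evaluation of the characteristic function on a coordinate subspace, so that the problem reduces to the elementary fact that $\mathcal{S}$ is closed under restriction.
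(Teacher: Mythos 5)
Your proposal is correct and follows exactly the route the paper intends: the paper states this corollary without proof as an immediate consequence of Proposition~\ref{prop:schwfunc}, and your argument supplies the standard details (characteristic function as inverse Weyl transform, the Fourier isomorphism of Schwartz space for the Wigner function, and restriction of the characteristic function to a coordinate subspace for the partial trace). No gaps.
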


The following Theorem contains the basic properties of Schwartz operators that we use.

\begin{theorem}\label{theo:schwprop}
Let $\mathcal{H}=L^{2}(\reals^{2n})$ and $T\in\mathcal{B}(\mathcal{H})$. Then
 \begin{enumerate}[(i)]
  \item Let $f$ be a polynomial function on the entries of the vector $R=(Q_{1},P_{1},\ldots,Q_{n},P_{n})$ and $\{W_{\xi}\}$ a Weyl system. If $T\in\mathfrak{S}(\mathcal{H})$, then $\Tr[f(R)T]=\Tr[Tf(R)]$. Moreover, $f(R)T\in\mathfrak{S}(\mathcal{H})$ and $\Tr[W_{\xi}f(R)T]=\Tr[f(R)TW_{\xi}]=\Tr[TW_{\xi}f(R)]$.
  \item If $T\in\mathfrak{S}(\mathcal{H})$, then $T$ is trace-class. 
  \item If $T\in\mathfrak{S}(\mathcal{H})$, then $|T|\in\mathfrak{S}(\mathcal{H})$.
  \item If $0<T\in\mathfrak{S}(\mathcal{H})$, then $\sqrt{T}\in\mathfrak{S}(\mathcal{H})$.
  \item If $T\in\mathfrak{S}(\mathcal{H})$ then $T^{*}\in\mathfrak{S}(\mathcal{H})$.
 
 \end{enumerate}
\end{theorem}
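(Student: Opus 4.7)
The plan is to treat the five items in order of difficulty, leaning on the canonical commutation relations $[Q_{j},P_{k}]=i\delta_{jk}$ to reduce everything except (iv) to essentially algebraic manipulations, and on an integral representation of the square root for (iv).

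Part (ii) is immediate from taking $\alpha=\alpha'=\beta=\beta'=0$ in Definition \ref{def:schwarzop}. For part (v) one observes that
\[
\bigl(P^{\alpha}Q^{\beta}T^{*}P^{\alpha'}Q^{\beta'}\bigr)^{*}=Q^{\beta'}P^{\alpha'}T\,Q^{\beta}P^{\alpha},
\]
and reordering each block $Q^{\beta'}P^{\alpha'}$ and $Q^{\beta}P^{\alpha}$ into the canonical $P$-before-$Q$ form by the CCR produces a finite sum of terms $P^{\tilde\alpha'}Q^{\tilde\beta'}T P^{\tilde\alpha}Q^{\tilde\beta}$ of lower or equal total degree, each trace class by hypothesis; invariance of the trace norm under the adjoint yields (v). The same CCR-reordering applied to $P^{\alpha}Q^{\beta}f(R)T P^{\alpha'}Q^{\beta'}$ shows $f(R)T\in\mathfrak{S}(\mathcal{H})$, which is the first half of (i). The cyclicity identity $\Tr[f(R)T]=\Tr[T f(R)]$ is then obtained by approximating $T$ by $T_{n}:=E_{n}T E_{n}$, where $E_{n}$ is the spectral projection of the total number operator onto $[0,n]$: each $T_{n}$ is finite rank, cyclicity is automatic, and dominated convergence together with the trace-norm estimates coming from the Schwartz property lets one pass to the limit. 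The versions with $W_{\xi}$ inserted follow from the same argument after noting that $W_{\xi}f(R)W_{\xi}^{*}$ is again a polynomial in $R$ of the same degree, by the Weyl relations~\eqref{eq:weylrel}.

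Parts (iii) and (iv) are linked through the polar decomposition $T=U|T|$, so I would prove (iv) first and then deduce (iii). For (iv) I would use the Balakrishnan representation
\[
\sqrt{T}=\frac{1}{\pi}\int_{0}^{\infty}\lambda^{-1/2}\,T(T+\lambda)^{-1}\,d\lambda,
\]
insert it into $P^{\alpha}Q^{\beta}\sqrt{T}\,P^{\alpha'}Q^{\beta'}$, and bound the trace norm of each slice. The commutator identity
\[
\bigl[A,(T+\lambda)^{-1}\bigr]=-(T+\lambda)^{-1}[A,T](T+\lambda)^{-1}
\]
allows one to move $P_{j}$ and $Q_{j}$ past the resolvent at the cost of an additional resolvent factor (providing $O(\lambda^{-1})$ decay for large $\lambda$) and a commutator $[R_{k},T]$ that remains Schwartz. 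Combined with $\norm{T(T+\lambda)^{-1}}\leq 1$ near $\lambda=0$ and $\norm{T(T+\lambda)^{-1}}\leq\lambda^{-1}\norm{T}$ as $\lambda\to\infty$, the integrand is integrable against $\lambda^{-1/2}d\lambda$ on both ends, so $\sqrt{T}\in\mathfrak{S}(\mathcal{H})$. With (iv) in hand, (iii) follows: $T^{*}T$ is Schwartz as a product of Schwartz operators (using (v) together with CCR-reordering and the bound $\norm{AB}_{1}\leq\norm{A}_{2}\norm{B}_{2}$ on each rearranged term), hence $|T|=\sqrt{T^{*}T}\in\mathfrak{S}(\mathcal{H})$.

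The main obstacle is clearly (iv): items (i), (ii), (iii), (v) reduce to algebraic manipulations with the CCR together with the polar decomposition, whereas the square root is a genuinely non-algebraic functional-calculus operation. The bookkeeping of how many commutators $[R_{k},T]$ accumulate when one pushes a monomial $P^{\alpha}Q^{\beta}$ through the resolvent is delicate, and one must verify that these iterated commutators remain Schwartz so that the resulting trace-norm bound survives integration against $\lambda^{-1/2}d\lambda$ at both $\lambda=0$ and $\lambda=\infty$.
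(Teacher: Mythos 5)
First, a point of reference: the paper does not prove Theorem~\ref{theo:schwprop} at all --- it is quoted verbatim from Ref.~\onlinecite{KKW15}, so there is no in-paper proof to compare against. Judged on its own merits, your treatment of (ii), (v), the first half of (i), and the reduction of (iii) to (iv) via $|T|=\sqrt{T^{*}T}$ (with $T^{*}T$ Schwartz because $P^{\alpha}Q^{\beta}T^{*}\cdot TP^{\alpha'}Q^{\beta'}$ is a product of two trace-class factors) is sound in outline, modulo the usual domain bookkeeping for unbounded operators that Ref.~\onlinecite{KKW15} handles via quadratic forms.

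The genuine gap is in (iv), exactly where you locate the difficulty. Your commutator expansion through the resolvent fails at the $\lambda\to0$ end of the Balakrishnan integral. Once you move a factor $R_{k}$ past $(T+\lambda)^{-1}$, the correction term carries a \emph{bare} resolvent $(T+\lambda)^{-1}$ that is not tamed by an adjacent factor of $T$: for a positive trace-class $T$ on an infinite-dimensional space, $0$ lies in the closure of the spectrum, so $\norm{(T+\lambda)^{-1}}=\lambda^{-1}$ exactly, and the resulting trace-norm bound for the commutator terms behaves like $\lambda^{-1}$ near the origin. Against the measure $\lambda^{-1/2}d\lambda$ this gives $\int_{0}\lambda^{-3/2}d\lambda=\infty$. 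The bounds you invoke ($\norm{T(T+\lambda)^{-1}}\leq1$, $\norm{T(T+\lambda)^{-1}}\leq\lambda^{-1}\norm{T}$) control only the undisturbed factor, not these remainders; attempts to repair them, e.g.\ via $\norm{T^{1/2}(T+\lambda)^{-1}}\leq(2\sqrt{\lambda})^{-1}$, reintroduce $T^{1/2}$ and become circular. A route that does close is spectral rather than integral: from the Schwartz property one gets $T\leq c_{k}(1+N)^{-2k}$ for every $k$ ($N$ the total number operator), operator monotonicity of the square root gives $\sqrt{T}\leq\sqrt{c_{k}}\,(1+N)^{-k}$, hence $(1+N)^{m}\sqrt{T}(1+N)^{m}\leq\sqrt{c_{k}}\,(1+N)^{2m-k}$ is trace class for $k$ large; since monomials $P^{\alpha}Q^{\beta}$ are relatively bounded by powers of $1+N$, this yields $\sqrt{T}\in\mathfrak{S}(\mathcal{H})$. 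This is essentially the argument of Ref.~\onlinecite{KKW15}, and it replaces your delicate $\lambda$-integration by a single use of operator monotonicity.
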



For Schwartz-density operators we can write explicit formulas for the gradient and Hessian of the characteristic function in terms of a trace:
Consider $\{\xi_{k}\}_{k=1}^{2n}$ any basis in $\reals^{2n}$, then the gradient of $\chi(\xi)$, denoted by $\nabla\chi(\xi)$, is defined by the entries
\begin{equation*}
 \frac{\partial\chi(\xi)}{\partial\xi_{k}}=\frac{d}{dt}\chi(\xi+t\xi_{k})\Bigr|_{t=0}.
\end{equation*}
 
\begin{lemma}[Gradient of the Weyl Operator]\label{l:qcfmoments2}
Let $T$ be a Schwartz operator and $\nabla_{\eta}:=\eta\cdot\nabla$. Then the following identities hold
\begin{align}
& \chi_{R_{\xi_{k}}T}(\xi)=\left(\frac{1}{2}\xi_{k}\cdot\sigma\xi-i\frac{\partial}{\partial\xi_{k}} \right)\chi_{T}(\xi) \label{eq:dev1}\\
& \chi_{TR_{\xi_{k}}}(\xi)=\left(-\frac{1}{2}\xi_{k}\cdot\sigma\xi-i\frac{\partial}{\partial\xi_{k}} \right)\chi_{T}(\xi) \label{eq:dev2}\\
& (\nabla_{\eta}\chi_{T})(\xi)=\frac{i}{2}\Tr(\{W_{\xi},R_{\eta}\}T)=\frac{i}{2}\Tr(W_{\xi}\{R_{\eta},T\}) \label{eq:gradient2} \\
&(\xi_{k}\cdot\sigma\xi)\chi_{T}(\xi)=\Tr([W_{\xi},R_{\eta}]T)=\Tr(W_{\xi}[R_{\eta},T])  \label{eq:commutatorweyl}
\end{align}
\end{lemma}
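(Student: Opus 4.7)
The plan is to reduce everything to a single computation: the derivative of the Weyl operator $W_\xi$ with respect to $\xi$. Starting from the Weyl relation \eqref{eq:weylrel} with $\eta$ replaced by $t\xi_{k}$, I would write
\begin{equation*}
W_{\xi+t\xi_{k}} = e^{\frac{it}{2}\,\xi\cdot\sigma\xi_{k}}\, W_{\xi}\, W_{t\xi_{k}},
\end{equation*}
and differentiate at $t=0$, using $\frac{d}{dt}W_{t\xi_{k}}\big|_{t=0}=iR_{\xi_{k}}$. This yields the key identity
\begin{equation*}
\frac{\partial}{\partial\xi_{k}}W_{\xi} = i\, W_{\xi} R_{\xi_{k}} + \tfrac{i}{2}(\xi\cdot\sigma\xi_{k})\, W_{\xi}.
\end{equation*}

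Next I would invoke Theorem \ref{theo:schwprop} to guarantee that for any Schwartz operator $T$, the operators $R_{\xi_{k}}T$ and $TR_{\xi_{k}}$ are again Schwartz, hence trace-class, so the objects $\chi_{R_{\xi_{k}}T}$ and $\chi_{TR_{\xi_{k}}}$ are well-defined and cyclicity of the trace holds. A standard dominated-convergence argument (using the fact that $W_{\xi}$ is unitary and $R_{\xi_k}T,\,TR_{\xi_k}$ are trace-class) allows pulling the derivative in $\xi_k$ inside the trace. Taking the trace of the previous display against $T$ thus gives
\begin{equation*}
\frac{\partial}{\partial\xi_{k}}\chi_{T}(\xi)= i\,\chi_{R_{\xi_{k}}T}(\xi)+\tfrac{i}{2}(\xi\cdot\sigma\xi_{k})\,\chi_{T}(\xi),
\end{equation*}
and solving for $\chi_{R_{\xi_{k}}T}(\xi)$, together with the antisymmetry $\xi\cdot\sigma\xi_k=-\xi_k\cdot\sigma\xi$, produces \eqref{eq:dev1}.

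For \eqref{eq:dev2}, I would compute the commutator $[R_{\xi_{k}},W_{\xi}]$. Since $[R_{\xi_{k}},iR_{\xi}]$ is the scalar $-\xi_{k}\cdot\sigma\xi$, the standard identity $[A,e^{B}]=[A,B]\,e^{B}$ (valid whenever $[A,B]$ is central) yields $[R_{\xi_{k}},W_{\xi}]=-(\xi_{k}\cdot\sigma\xi)\,W_{\xi}$. Using cyclicity from Theorem \ref{theo:schwprop}(i),
\begin{equation*}
\chi_{TR_{\xi_{k}}}(\xi) = \Tr[R_{\xi_{k}}W_{\xi}T] = \chi_{R_{\xi_{k}}T}(\xi) - (\xi_{k}\cdot\sigma\xi)\,\chi_{T}(\xi),
\end{equation*}
which combined with \eqref{eq:dev1} gives \eqref{eq:dev2}. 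Finally, \eqref{eq:gradient2} and \eqref{eq:commutatorweyl} follow immediately by adding and subtracting \eqref{eq:dev1} and \eqref{eq:dev2} (and extending by linearity in $\eta$); the two equivalent forms with $\{R_{\eta},T\}$ and $[R_{\eta},T]$ are obtained by cycling the factor $R_{\eta}$ around the trace, once more using Theorem \ref{theo:schwprop}(i).

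The only real subtlety is the interchange of $\partial/\partial\xi_{k}$ with the trace and the cyclicity of the trace for the unbounded operators $R_{\xi_{k}}$, $R_{\eta}$; both are precisely the points where the Schwartz hypothesis on $T$ is used, so no further analytic work beyond invoking Theorem \ref{theo:schwprop} is expected.
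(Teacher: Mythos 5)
Your proposal is correct and follows essentially the same route as the paper: differentiate the Weyl relation at $t=0$, use the Schwartz properties of $T$ (Theorem~\ref{theo:schwprop}) to get trace-class-ness and cyclicity, establish \eqref{eq:dev1} and \eqref{eq:dev2}, and obtain \eqref{eq:gradient2} and \eqref{eq:commutatorweyl} by adding and subtracting. The only place the paper does genuinely more work is the step you label a ``standard dominated-convergence argument'': it proves $\frac{d}{dt}\Tr[W_{t\eta}T]\bigr|_{t=0}=i\Tr[R_{\eta}T]$ via the spectral theorem for $R_{\eta}$ and the bound $\bigl|\frac{e^{itx}-1}{it}-x\bigr|\leq 2|x|$, with integrability supplied by $\Tr|R_{\eta}|T<\infty$ for Schwartz $T$ --- note that unitarity of $W_{\xi}$ and trace-class-ness of $R_{\xi_{k}}T$ alone would not furnish the dominating function, so the Schwartz hypothesis enters exactly where you say it does, but through the moments of $T$ rather than through the properties you cite.
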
 

\begin{proof}[Proof of lemma \ref{l:qcfmoments2}]
First note that Eq.~\eqref{eq:gradient2} and Eq.~\eqref{eq:commutatorweyl} follow from adding and substracting Eq.~\eqref{eq:dev1} and Eq.~\eqref{eq:dev2} together with Theorem \ref{theo:schwprop}$(i)$. We show first that 

\begin{equation}\label{eq:dev0}
\frac{d}{dt}(\Tr W_{t\eta}T)\Bigr|_{t=0}=i\Tr R_{\eta}T.
\end{equation}
Since $T$ is trace-class (Theorem \ref{theo:schwprop}$(ii)$) we can decompose $T=T_{1}+iT_{2}$ with $T_{1},T_{2}$ self-adjoint trace-class operators. Moreover, we can write $T_{1},T_{2}$ as a fnite linear combination of positive, trace-class operators and thus from the linearity of the trace we can assume without lost of generality that $T$ is a positive, trace-class operator. From the spectral decomposition $R_{\eta}=\int xdE(x)$ and the functional calculus we obtain

\begin{align*}
\left|\Tr \left(\frac{W_{t\eta}-\iden}{it}-R_{\eta}\right)T \right|&=\left| \int\left(\frac{e^{itx}-1}{it}-x\right)\Tr dE(x)T \right|, \\
&\leq\int \left|\frac{e^{itx}-1}{it}-x\right|\Tr dE(x)T. \\ 
\end{align*}

Using $\left|\frac{e^{itx}-1}{it}-x\right|\leq 2|x|$, the Cauchy-Schwarz inequality and Theorem \ref{theo:schwprop}(iv) 

\begin{align*}
\int \left|\frac{e^{itx}-1}{it}-x\right|\Tr dE(x)T &\leq  2\Tr |R_{\eta}|T, \\ 
&\leq 2\norm{R^{2}_{\eta}\sqrt{T}}_{2}\norm{\sqrt{T}}_{2}<\infty.  
\end{align*}
Hence from the dominated convergence theorem we proved what is required. Now we proceed to prove Eq.~\eqref{eq:dev1}. From Theorem \ref{theo:schwprop}$(i,ii)$ we have that $R_{\xi_{k}}T$ is trace-class and therefore the Weyl transform exists. Moreover, $\chi_{R_{\xi_{k}}T}(\xi)$ is Schwartz, hence continuous, as it is the Weyl transform of a Schwartz operator (Proposition \ref{prop:schwfunc}). So we just need to verify the relation of  Eq.~\eqref{eq:dev1} as Eq.~\eqref{eq:dev2} is similar. This follows directly from Eq.~\eqref{eq:dev0} and \eqref{eq:weylrel} 

\begin{align*}
\chi_{R_{\xi_{k}}T}(\xi)&=-i\frac{d}{dt}\Tr W_{\xi}W_{t\xi_{k}}T\Bigr|_{t=0},\\
&=-i\frac{d}{dt}\left(e^{it\xi_{k}\cdot\sigma\xi/2}\chi(\xi+t\xi_{k})\right)\Bigr|_{t=0}.
\end{align*}

\end{proof} 
 
We remark that since $T$ is Schwartz, higher order derivatives of $\chi_{T}(\xi)$ can be written explicitely by using theorem \ref{theo:schwprop}$(i)$ and Eq.~\eqref{eq:gradient2}.  For instance, the Hessian of $\chi_{\rho}(\xi)$ where $\rho\in\mathscr{S}(\mathcal{H})$ has entries given by
 
\begin{equation}
 \frac{\partial^{2}\chi(\xi)}{\partial\xi_{k}\partial\xi_{l}}=-\frac{1}{4}\Tr[W_{\xi}\{R_{\xi_{k}},\{R_{\xi_{l}},\rho\}\}].
\label{eq:jacob}
 \end{equation} 

In particular, if the state $\rho$ has covariance matrix $\varGamma$ and displacement vector $d$
\begin{equation}
\nabla\chi(0)=i\sigma d\qquad \textnormal{and}	\qquad\textnormal{(Hessian }\chi)(0)=-\sigma\left(\frac{\varGamma}{2}+dd^{T}\right)\sigma^{T}. \label{eq:identitiesGH}
\end{equation}

%
%

\section{Proofs}\label{proofsection}

\subsection{Proof of lemma \ref{teo:class2}}

We write $S=\begin{pmatrix}{} A & B \\ C & D \end{pmatrix}$ and express the condition of the vanishing off-diagonal terms 
\begin{equation}
A\varGamma_{1}C^{T}+B\varGamma_{2}D=0 \quad\text{for all}\quad \varGamma_{1},\varGamma_{2}\geq i\sigma.
\label{eq:bdt}
\end{equation}
If we fix $A,B,C,$ and $D$ all different from zero such that Eq.(\ref{eq:bdt}) is satisfied, we can always find $\varGamma_{1}$ 
and $\varGamma_{2}$ such that for these choice of submatrices $A\varGamma_{1}C^{T}+B\varGamma_{2}D^{T}\neq0$. Then each
summand in Eq.(\ref{eq:bdt}) must be zero. For instance if $A\varGamma_{1}C^{T}=-B\varGamma_{2}D^{T}$ take now 
$\varGamma_{1}\to2\varGamma_{1}$ then $A(2\varGamma_{1})C^{T}+B\varGamma_{2}D^{T}=A\varGamma_{1}C^{T}=0$. \\
W.l.o.g let us consider $A\varGamma_{1}C^{T}=0$, the other case will be analogous. We take the singular value decomposition 
of $A=UDV$ and $C=W\Sigma Z$ where $U,V,W,Z$ are unitaries. From here we choose $\varGamma_{1}=V^{-1}PZ^{-T}$ where 
$P$ is a positive definite matrix (recall that the sigma positive condition $\varGamma_{1}\geq i\sigma$ can be always 
obtained from rescaling a positive matrix). Then $A\varGamma_{1}C^{T}=UDP\Sigma W^{T}\neq0$ everytime 
we choose the proper $\varGamma_{1}$. Consequently $A=0$, $C=0$ or both. Likewise for $B$ and $D$. \\ 
The only matrices that fulfill the symplectic conditions are
\begin{equation}
\begin{pmatrix}{} A & 0 \\ 0 & D \end{pmatrix},\begin{pmatrix}{} 0 & B \\ C & 0 \end{pmatrix},
\end{equation}
provided $A,B,C$ and $D$ are symplectic.

\subsection{Proof of lemma \ref{teo:mainclass}}

First, it should be noted that the given assumptions immediately imply $S\begin{pmatrix}{} \varGamma & 0 \\ 0 & \varGamma \end{pmatrix}S^{T}=\begin{pmatrix}{} \cdot & 0 \\ 0 & * \end{pmatrix}$ 
for all $\varGamma>0$ and by continuity for all $\varGamma\geq0$ and consequently for all $\varGamma=\varGamma^{T}$. The latter is due
to the fact that every symmetric matrix can be decomposed in a semi-definite positive and negative part.\\
From Lemma \ref{l:lppm} we know which is the block structure of $S$. We consider the case where $A,B,C$ and $D$ are invertible, the other cases
will be contained here as we will see.
Using part (ii) of Lemma \ref{l:lppm} we have
\begin{equation}
A\otimes C=-B\otimes D.
\label{eq:77}
\end{equation}
We multiply Eq. (\ref{eq:77}) by $(A^{-1}\otimes\iden)$ from the left and take the trace in the first component, likewise we multiply
Eq. (\ref{eq:77}) by $(\iden\otimes D^{-1})$ and take the trace in the second component to obtain

\begin{equation}
C=\alpha D \quad \text{where}\quad \alpha:=\frac{-\Tr A^{-1}B}{2n}\in\mathbb{R},
\end{equation}
and 
\begin{equation}
B=\beta A \quad \text{where}\quad \beta:=\frac{-\Tr D^{-1}C}{2n}\in\mathbb{R}.
\end{equation}
From equation Eq. (\ref{eq:77}) we obtain that $\alpha=-\beta$\ and that
\begin{equation*}
S=\begin{pmatrix}{} A & \alpha A \\ -\alpha D & D \end{pmatrix}.
\end{equation*}
Moreover, the symplectic constraints on $S$ give us that $A,D\in\frac{1}{\sqrt{1+\alpha^{2}}}Sp(2n,\mathbb{R})$.
We write $A=\frac{1}{\sqrt{1+\alpha^{2}}}X$ and $D=\frac{1}{\sqrt{1+\alpha^{2}}}Y$ with $X,Y\in Sp(2n,\mathbb{R})$ to obtain 
Eq. (\ref{eq:main}). Finally for $\alpha\neq0$ we define $\gamma=\frac{1}{\alpha}$ and obtain the remaining equation. 
The case $\gamma=0$ is covered by $\alpha\to\pm\infty$ and it gives the swap operation.
Clearly $\alpha=0$ gives the local transformation.


\subsection{Proof of the stability of DS theorem \ref{teo:pids}} \label{proofofteo:pids}

The proof involves a series of steps. We first use Parseval's theorem to express the distance of the quantum states in terms of the $L_{2}-$distance of their respective characteristic functions. Next, we show that there is a ball $\mathfrak{B}_{r}$ around the origin of phase space where the characteristic functions do not vanish. The radius of this ball scales inversely proportional to the largest variance of the input states and proportional to $\log(1/\varepsilon)$. Thus the smaller the error parameter $\varepsilon$ the bigger this region is. We then proceed to bound the distance separately on $\mathfrak{B}_{r}$ and its complement $\mathfrak{B}^{c}_{r}$. For the latter region, we exploit the relation between the tails of a distribution and the finiteness of it moments. Inside $\mathfrak{B}_{r}$, the problem is equivalent to the stability of the Gaussian functional Eq.~\eqref{eq:fegs} with restricted convex domain. For that matter, the stability of the Gaussian functional equation is reduced to the stability of the fundamental functional equation, namely the Cauchy functional equation.   \\

We may assumed without lost of generality that the input states (and therefore the output states) are centered. This is justified by the fact that the trace and HS norms are unitarily invariant and the operation of ``Gaussification" (the operation on bosonic quantum states which produces Gaussian states with the same first and second moments of the input state) commutes with the displacement operation $\rho\mapsto W_{d}\rho W^{*}_{d}$. \\

 We denote by $\chi_{ab},\chi_{a}$ and $\chi_{b}$ the characteristic functions of $\rho_{ab},\rho_{a}$ and $\rho_{b}$, respectively. From Lemma~\eqref{lemma:closetraces}, we have 
 $\norm{g}_{1}\leq3\varepsilon$. Let $\eta_{1},\eta_{2}\in\reals^{2n}$ and $R_{1},R_{2}$ be vectors of canonical operators as in Eq. \ref{eq:notationR1R2}.
We write
\begin{equation}
 G(\eta_{1},\eta_{2}):=\Tr[e^{i\eta_{1}\cdot \sigma R_{1}}\otimes e^{i\eta_{2}\cdot \sigma R_{2}}g]=\chi_{ab}(\eta_{1},\eta_{2})-\chi_{a}(\eta_{1})\chi_{b}(\eta_{2}).
\end{equation}
Now, using the covariant property of Gaussian unitary operations Eq.~\eqref{eq:cov}

\begin{equation*}
 \chi_{\rho_{ab}}(\xi)=\chi_{\rho_{1}\otimes\rho_{2}}(S^{T}_{\theta}\xi),
\end{equation*}

and the fact that $G(\eta_{1},0)=G(0,\eta_{2})=0$, we write for all $\eta_{1},\eta_{2}\in\reals^{2n}$ the dynamical process in terms 
of characteristic functions as 

\begin{equation}
  \chi_{1}(\cos\theta\eta_{1}+\sin\theta\eta_{2})\chi_{2}(\cos\theta\eta_{2}-\sin\theta\eta_{1})=\chi_{1}(\cos\theta\eta_{1})\chi_{1}(\sin\theta\eta_{2})\chi_{2}(\cos\theta\eta_{2})\chi_{2}(-\sin\theta\eta_{1})
+G(\eta_{1},\eta_{2}).
\label{eq:pfegs}
 \end{equation}

This last equation resembles the ideal functional equation Eq.~\eqref{eq:fegs} plus a new remainder term $G(\eta_{1},\eta_{2})$. From  H\"older's inequality and the
definition of $G$ we note that $\norm{G}\leq3\varepsilon$.\\

We state the following lemma with proof in section \ref{technicallemmassubsection}.

\begin{lemma}\label{lemma:region}
 Let $\chi_{1}$ and $\chi_{2}$ be two characteristic functions with respective density operators $\rho_1$, $\rho_2$ and
 covariance matrices $\varGamma_{1},\varGamma_{2}$. Define for any $\theta\notin\{\mathbb{Z}\frac{\pi}{2}\}$
 \begin{equation*}
  G(\eta_{1},\eta_{2}):=\chi_{1}(\cos\theta\eta_{1})\chi_{1}(\sin\theta\eta_{2})\chi_{2}(\cos\theta\eta_{2})\chi_{2}(-\sin\theta\eta_{1})-\chi_{1}(\cos\theta\eta_{1}+\sin\theta\eta_{2})\chi_{2}(\cos\theta\eta_{2}-\sin\theta\eta_{1}).
\end{equation*}
Assume $|G(\eta_1,\eta_2)|\leq 3\epsilon$ for all $\eta_{1},\eta_{2}\in\reals^{2n}$, let $\lambda:=\frac{1}{2}\max\{\norm{\varGamma_{1}},\norm{\varGamma_{2}}\}$ be the largest variance of the states $\rho_{1}$


 and $\rho_{2}$ and define
 \begin{equation}
  r:=\sqrt{\frac{1}{\lambda}\log_{2}\frac{1}{\varepsilon^{1/12}}}.
  \label{def:r}
 \end{equation}

 Then for $\eta\in\mathfrak{B}_{r}:=\left\{\xi\ | \norm{\xi}_{2}\leq r\right\}$
 \begin{equation}
  |\chi_{i}(\eta)|\geq 12\varepsilon^{1/12} \qquad\text{i=1,2.}
 \end{equation}
\end{lemma}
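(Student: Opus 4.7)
Proof plan. My approach is to combine the quantum Bochner--Khinchin bound near the origin with an iterative use of the approximate functional equation. First I would exploit Corollary~\ref{cor:marginals}: for each fixed $\eta$, the scalar function $t\mapsto\chi_i(t\eta)$ is a classical characteristic function whose underlying probability distribution has variance $\frac{1}{2}\eta\cdot\sigma\varGamma_i\sigma^{T}\eta\le\lambda\|\eta\|^{2}$. The elementary classical estimate $|\phi(1)|^{2}\ge 1-\mathrm{Var}$ then yields $|\chi_i(\eta)|\ge\sqrt{1-\lambda\|\eta\|^{2}}$, so in particular $|\chi_i(\eta)|\ge 2^{-1/2}$ on the seed ball of radius $R_{0}:=(2\lambda)^{-1/2}$.

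Second, I would reduce the bivariate near-equation to two univariate ones by killing one factor on the left-hand side. Setting $\eta_2=\tan\theta\,\eta_1$ makes the argument of $\chi_2$ on the left vanish; after rescaling $\zeta=\eta_1/\cos\theta$ this gives
\begin{equation*}
\chi_1(\zeta)=\chi_1(c\zeta)\,\chi_1(s\zeta)\,|\chi_2(p\zeta)|^{2}+G_1,\qquad |G_1|\le 3\varepsilon,
\end{equation*}
with $c=\cos^{2}\theta$, $s=\sin^{2}\theta$, $p=\sin\theta\cos\theta$ satisfying $c+s=1$ (hence $c^{2}+s^{2}+2p^{2}=1$) and $Q:=\max(c,s)\in[1/2,1)$. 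The symmetric substitution $\eta_1=-\tan\theta\,\eta_2$ produces the analogous identity for $\chi_2$.

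Third, I would iterate this reduced relation to propagate the lower bound outward. Writing $M(R):=\inf\{|\chi_i(\eta)|:\|\eta\|\le R,\,i=1,2\}$, the reduced equations yield $M(R/Q)\ge M(R)^{4}-3\varepsilon$, since any $\zeta$ of norm $\le R/Q$ has $c\|\zeta\|,s\|\zeta\|,p\|\zeta\|\le R$. Starting from $M(R_{0})\ge 2^{-1/2}$ and iterating the scalar recursion $m_{k+1}=m_{k}^{4}-3\varepsilon$ raises the radius to $R_{0}Q^{-k}$ at step $k$, with $m_{k}\approx 2^{-4^{k}/2}$ when the $-3\varepsilon$ perturbation is treated as subdominant. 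Choosing $k$ so that $R_{0}Q^{-k}=r$ forces $4^{k}=\log_{2}(1/\varepsilon)/6$ in the extremal case $Q=1/2$ (attained at $\theta=\pi/4$), whence $m_{k}\approx\varepsilon^{1/12}$, matching the target exponent.

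The main technical obstacle will be controlling the accumulated $-3\varepsilon$ error across the $k\approx\frac{1}{2}\log_{2}\log_{2}(1/\varepsilon)$ iterations and extracting the precise prefactor $12$. At each step the multiplicative correction to $m_{k+1}$ is of order $\varepsilon/m_{k}^{4}$, which remains negligibly small provided $m_{k}^{4}\gg\varepsilon$; this is preserved because $m_{k}\ge\varepsilon^{1/12}$ gives $m_{k}^{4}\ge\varepsilon^{1/3}$, so the total perturbation to $-\log_{2}m_{k}$ is of order $\varepsilon^{2/3}\log\log(1/\varepsilon)$. For admissible $\theta$ with $Q>1/2$, more iterations are needed to reach $r$ and additional care is required to absorb the $\theta$-dependence into the constants $12$ and $1/12$, exploiting that $Q$ stays bounded away from $1$. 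Once this bookkeeping is carried out, one concludes $|\chi_i(\eta)|\ge 12\varepsilon^{1/12}$ uniformly on $\mathfrak{B}_r$ for $\varepsilon$ sufficiently small.
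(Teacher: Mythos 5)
Your proposal follows essentially the same route as the paper's proof: the seed bound $|\chi_i|\geq 1/2$ on a ball of radius $\sim\lambda^{-1/2}$ from the classical variance estimate for the marginal characteristic functions, the substitution $\eta_2=\tan\theta\,\eta_1$ (and its mirror) to kill one factor and obtain a univariate near-identity, and the quartic recursion $M(R/Q)\geq M(R)^4-3\varepsilon$ on the infimum over balls, iterated $O(\log\log(1/\varepsilon))$ times until the radius reaches $r$. The differences are only in bookkeeping (your seed radius is $(2\lambda)^{-1/2}$ versus the paper's $\lambda^{-1/2}$, and the paper absorbs the accumulated $-3\varepsilon$ errors by a crude induction giving $(1/2)^{4^k}-4\varepsilon^{1/12}$), so the argument is correct and matches the paper's.
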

The choice of exponent ${1/12}$ for $\varepsilon$ in Eq.~\eqref{def:r} will become clear in the estimates done in section \ref{sec:appendix}.\\

We divide phase space in two separating regions. One where the characteristic functions do not vanish, namely inside the ball $\mathfrak{B}_{r}:=\left\{\xi\ | \norm{\xi}^{2}_{2}\leq r^{2}\right\}$, 
and its complement which we denote by $\mathfrak{B}^{c}_{r}$. So that 
with the help of Parseval's theorem, we express the distance between the two states as 
\begin{equation}
\begin{split}
 (2\pi)^{n}\norm{\rho_{1}-\rho'_{1}}^{2}_{2}&=\norm{\chi_{1}-\Phi}^{2}_{2}\\
 &=\int\limits_{\xi\in\mathfrak{B}_{r/2}}\left|\chi_{1}(\xi)-\Phi(\xi)\right|^{2} d\xi +\int\limits_{\xi\in\mathfrak{B}^{c}_{r/2}}\left|\chi_{1}(\xi)-\Phi(\xi)\right|^{2} d\xi.
\label{eq:Parseval2}
 \end{split}
\end{equation}

We compute the bound for $\mathfrak{B}_{r/2}$ and $\mathfrak{B}^{c}_{r/2}$ separately.\\

{\renewcommand\addcontentsline[3]{} \subsubsection*{Bound on the region where the characteristic function might vanish}} \label{subsec:2}

We use $|z_{1}-z_{2}|^{2}\leq (|z_{1}|^{2}+|z_{2}|^{2})/2$ for $z_{1},z_{2}\in\mathbb{C}$ to express the bound as 

\begin{equation} \label{eq:tailseq}
  \int\limits_{\xi\in\mathfrak{B}^{c}_{r/2}} |\chi_{1}(\xi)-\Phi(\xi)|^{2}d\xi \leq \frac{1}{2}\int\limits_{\xi\in\mathfrak{B}^{c}_{r/2}}|\chi_{1}(\xi)|^{2}d\xi + \frac{1}{2}\int\limits_{\xi\in\mathfrak{B}^{c}_{r/2}}|\Phi(\xi)|^{2}d\xi.
\end{equation}

For the first term of the RHS of Eq. (\ref{eq:tailseq}) we use that $1<2\norm{\xi}_{2}/r$ for $\xi\in\mathfrak{B}^{c}_{r/2}$ so that 

\begin{equation}\label{eq:bddtail1}
 \int\limits_{\xi\in\mathfrak{B}^{c}_{r/2}}|\chi_{1}(\xi)|^{2}d\xi \leq \frac{4}{r^{2}}\int\limits_{\xi\in\reals^{2n}}\norm{\xi}^{2}_{2}|\chi_{1}(\xi)|^{2}d\xi.
\end{equation}

Let us denote by $\mathcal{W}(\eta)$ the Wigner function of $|\chi_{1}(\xi)|^{2}$ (the product of two characteristic functions is a characteristic function), 
\begin{equation}\label{eq:wigrepsymcf}
\mathcal{W}(\eta)=\frac{1}{(2\pi)^{2n}}\int\limits_{\xi\in\reals^{2n}}e^{i\eta\cdot\sigma\xi}|\chi_{1}(\xi)|^{2}d\xi.
\end{equation}

It can be easily verified by direct computation, that the characteristic function $|\chi_{1}(\xi)|^{2}$ is centered and has CM $2\varGamma_{1}$. Moreover, we have

\begin{equation}\label{eq:bddtail2}
\int\limits_{\xi\in\reals^{2n}}\norm{\xi}^{2}_{2}|\chi_{1}(\xi)|^{2}d\xi = -(2\pi)^{n}\sum_{k=1}^{2n}\frac{\partial^{2}\mathcal{W}(0)}{\partial\eta^{2}_{k}},
\end{equation}
where $\eta_{k}$ are the components of the vector $\eta$ in an arbitrary but fixed basis. 

We use now the representation of the Wigner function in terms of the expectation values of the parity operator\cite{Gro76} $\mathcal{P}$

\begin{equation}\label{eq:parityrep}
 \mathcal{W}(\eta)=\frac{1}{\pi^{n}}\Tr[\rho W_{\eta}\mathcal{P}W_{-\eta}],
\end{equation}
where $\rho$ is the density operator corresponding to the characteristic function $|\chi_{1}(\xi)|^{2}$. The operator $\rho$ is clearly Schwartz as its characteristic function is a Schwartz function (corollary \ref{cor:schwtrace}). The parity operator $\mathcal{P}$ is the n-fold tensor product of the parity operators for a single degree of freedom and is the unitary operator that satisfies
\begin{align*}
&\mathcal{P}W_{\xi}\mathcal{P}^{*}=W_{-\xi}, \\
&\mathcal{P}R_{k}\mathcal{P}^{*}=-R_{k}, \\
 &\mathcal{P}=\mathcal{P}^{*}=\mathcal{P}^{-1}.
\end{align*}

Using Eq.~\eqref{eq:parityrep} and Eq.~\eqref{eq:gradient2} we compute

\begin{equation}\label{eq:bddtail3}
 \frac{\partial^{2}\mathcal{W}(0)}{\partial\eta_{k}\partial\eta_{l}}=-\frac{2}{\pi^{n}}\Tr[\mathcal{P}\rho\{R_{\eta_{l}},R_{\eta_{k}}\}].
\end{equation}

Thus from Eq.~\eqref{eq:bddtail1},~\eqref{eq:bddtail2} and Eq.~\eqref{eq:bddtail3} we find that

\begin{equation}
\int\limits_{\xi\in\mathfrak{B}^{c}_{r/2}}|\chi_{1}(\xi)|^{2}d\xi \leq  \frac{2^{n+4}}{r^{2}}\sum_{k=1}^{2n}\Tr [\mathcal{P}\rho R^{2}_{k}].
\end{equation}
Hence, we just need to bound the terms $\Tr [\mathcal{P}\rho R^{2}_{k}]$. In order to do this, we notice that $\mathcal{P}\rho=\rho\mathcal{P}$. Indeed,

\begin{align*}
\mathcal{P}\rho=\mathcal{P}\frac{1}{(2\pi)^{n}}\int |\chi(\xi)|^{2} W_{-\xi}d\xi &=\frac{1}{(2\pi)^{n}}\int |\chi(\xi)|^{2} W_{\xi}d\xi\;\mathcal{P}, \\
&=\frac{1}{(2\pi)^{n}}\int |\chi(\xi)|^{2} W_{-\xi}d\xi\;\mathcal{P}=\rho\mathcal{P}.
\end{align*}

Moreover, from the spectral decomposition of $\rho$ we have $\mathcal{P}\sqrt{\rho}=\sqrt{\rho}\:\mathcal{P}$. Accordingly,

\begin{align*}
\Tr [\mathcal{P}\rho R^{2}_{k}]&= \Tr [\mathcal{P}\sqrt{\rho} R^{2}_{k} \sqrt{\rho}],\\
&\leq\norm{\sqrt{\rho} R^{2}_{k} \sqrt{\rho}}_{1}, \\
&= \Tr[\rho R_{k}^{2}].
\end{align*}
Here we have used the Cauchy-Schwarz inequality and the cyclicity of the trace that comes from the properties of the Schwartz operator $\rho$; see Theorem \ref{theo:schwprop} $(i),(ii),(iv)$.\\

 In summary, we have the following bound for the tails of our characteristic function 
 \begin{align*}
   \int\limits_{\xi\in\mathfrak{B}^{c}_{r/2}}|\chi_{1}(\xi)|^{2}d\xi &\leq \frac{2^{n+4}}{r^{2}}\Tr \varGamma_{1}, \\
  & =\left(2^{n+4}12\lambda\Tr \varGamma_{1}\right)\frac{1}{\log\frac{1}{\varepsilon}}.
 \end{align*}

Since $\Phi(\xi)$ has the same CM as $\chi(\xi)$ we can use the same bound to obtain 
 
 \begin{align}
 \frac{1}{(2\pi)^{n}}\int\limits_{\xi\in\mathfrak{B}^{c}_{r/2}} |\chi_{1}(\xi)-\Phi(\xi)|^{2}d\xi &\leq \left(\frac{192\lambda\Tr \varGamma_{1}}{\pi^{n}}\right)\frac{1}{\log\frac{1}{\varepsilon}}, \nonumber \\
 &\leq \frac{c^{2}_{2}}{\log\frac{1}{\varepsilon}}, \label{eq:bddoutside}
 \end{align}
where
\begin{equation}
 c_{2}:=8\sqrt{\frac{3\lambda\Tr \varGamma_{ab}}{\pi^{n}}},
\end{equation}

and $\varGamma_{ab}$ is the CM of the output state $\rho_{ab}$. Note that since the BS is a passive transformation, the trace of the input CM, $\varGamma_{1}\oplus\varGamma_{2}$, is the same as the trace of the output CM $\varGamma_{ab}$. Thus is clear that $\Tr\varGamma_{1}\leq\Tr\varGamma_{ab}$. \\

{\renewcommand\addcontentsline[3]{}  \subsubsection*{Bound inside the region where $\chi$ does not vanish:}}

We proceed to compute the bound for the first term of the RHS of Eq. (\ref{eq:Parseval2}) following the ideas of Ref.~\onlinecite{fisk69,KY85,Ga74}. For that matter, let 
$\eta_{1},\eta_{2}\in\reals^{2n}$ with $\norm{\eta_{j}}_{2}<r/2$ so that $\cos\theta\eta_{1}+\sin\theta\eta_{2},\cos\theta\eta_{2}-\sin\theta\eta_{1}\in\mathfrak{B}_{r}$. 
Let us take the logarithm (principal branch) on both sides of Eq. (\ref{eq:pfegs}). We write
\begin{equation}
\Psi_{1}(\cos\theta\eta_{1}+\sin\theta\eta_{2})+\Psi_{2}(\cos\theta\eta_{2}-\sin\theta\eta_{1})=\Psi_{1}(\cos\theta\eta_{1})+\Psi_{1}(\sin\theta\eta_{2})+\Psi_{2}(\cos\theta\eta_{2})+\Psi_{2}(-\sin\theta\eta_{1})+Q(\eta_{1},\eta_{2}),
\label{eq:lfe}
\end{equation}
where $\Psi_{j}(\eta):=-\log\chi_{j}(\eta)$ for $j=1,2$ and 
\begin{equation}
Q(\eta_{1},\eta_{2}):=-\log\left(1+\frac{G(\eta_{1},\eta_{2})}{\chi_{1}(\cos\theta\eta_{1})\chi_{1}(\sin\theta\eta_{2})\chi_{2}(\cos\theta\eta_{2})\chi_{2}(-\sin\theta\eta_{1})}\right).
\label{eq:q}
\end{equation}

Since $\rho_{j},j=1,2$ is Schwartz, we can define continuous vector-valued functions $\phi_{j}(\xi):\mathfrak{B}_{r/2}\to\mathbb{C}^{2n}$ by 
\begin{equation*}
\phi_{j}(\xi):=\nabla\Psi_{j}(\xi).
\end{equation*}
 Note that $\phi_{j}(\xi),j=1,2,$ are in fact conservative vector fields and that $\phi_{j}(0)=0$ as $\rho_{j}$ is centered.  The gradient of $\phi_{j}$ is the Hessian of $\chi_{j}$ (see Lemma \ref{l:qcfmoments2}) and $2\nabla\phi_{j}(0)=\sigma\varGamma_{j}\sigma^{T}$ for $j=1,2$.  \\

{\renewcommand\addcontentsline[3]{}  \subsubsection*{Inhomogeneous Cauchy Functional Equation}}

Next, we want to obtain a functional equation only depending on $\chi_{1}$ or $\chi_{2}$. In order to 
do so, we differentiate Eq.~\eqref{eq:lfe} in the direction of $\eta_{1}$ to find
\begin{equation}
 \cos\theta\phi_{1}(\cos\theta\eta_{1}+\sin\theta\eta_{2})-\sin\theta\phi_{2}(\cos\theta\eta_{2}-\sin\theta\eta_{1})=\cos\theta\phi_{1}(\cos\theta\eta_{1})-
 \sin\theta\phi_{2}(-\sin\theta\eta_{1})+Q_{1}(\eta_{1},\eta_{2}),
\label{eq:ec9}
 \end{equation}
where $Q_{1}(\eta_{1},\eta_{2}):=\frac{dQ(\eta_{1}+t\eta_{1},\eta_{2})}{dt}\Bigr|_{t=0}$. We evaluate in Eq.~\eqref{eq:ec9} $\eta_{1}=0$ to get
\begin{equation}
 \cos\theta\phi_{1}(\sin\theta\eta_{2})-\sin\theta\phi_{2}(\cos\theta\eta_{2})=Q_{1}(0,\eta_{2}),
\label{eq:ec10}
\end{equation}
In a similar fashion we differentiate Eq.~\eqref{eq:lfe} in the direction of $\eta_{2}$ and set to zero to obtain

\begin{equation}
 \sin\theta\phi_{1}(\cos\theta\eta_{1}+\sin\theta\eta_{2})+\cos\theta\phi_{2}(\cos\theta\eta_{2}-\sin\theta\eta_{1})=\sin\theta\phi_{1}(\sin\theta\eta_{2})+
 \cos\theta\phi_{2}(\cos\theta\eta_{2})+Q_{2}(\eta_{1},\eta_{2}),
\label{eq:ec11}
 \end{equation}

\begin{equation}
 \sin\theta\phi_{1}(\cos\theta\eta_{1})+\cos\theta\phi_{2}(-\sin\theta\eta_{1})=Q_{2}(\eta_{1},0),
\label{eq:ec12}
\end{equation}
where 
$Q_{2}(\eta_{1},\eta_{2}):=\frac{dQ(\eta_{1},\eta_{2}+t\eta_{2})}{dt}\Bigr|_{t=0}$. \\

Now we will be able to decouple $\phi_{1}$ and $\phi_{2}$. First, we substract Eq.~\eqref{eq:ec10} from Eq.~\eqref{eq:ec9} and Eq.~\eqref{eq:ec12} from Eq.~\eqref{eq:ec11} to obtain

\begin{equation}
\begin{split}
 [\phi_{1}(\cos\theta\eta_{1}+\sin\theta\eta_{2})-\phi_{1}(\cos\theta\eta_{1})-\phi_{1}(\sin\theta\eta_{2})]&=\tan\theta[\phi_{2}(\cos\theta\eta_{2}-\sin\theta\eta_{1})-\phi_{2}(\cos\theta\eta_{2})-\phi_{2}(-\sin\theta\eta_{1})]\\
&+\frac{Q_{1}(\eta_{1},\eta_{2})-Q_{1}(0,\eta_{2})}{\cos\theta},
\end{split}
\label{eq:ec13}
\end{equation}
 
\begin{equation}
 \begin{split}
 [\phi_{2}(\cos\theta\eta_{2}-\sin\theta\eta_{1})-\phi_{2}(\cos\theta\eta_{2})-\phi_{2}(-\sin\theta\eta_{1})]&=-\tan\theta[\phi_{1}(\cos\theta\eta_{1}+\sin\theta\eta_{2})-\phi_{1}(\cos\theta\eta_{1})-\phi_{1}(\sin\theta\eta_{2})]\\
&+\frac{Q_{2}(\eta_{1},\eta_{2})-Q_{2}(\eta_{1},0)}{\cos\theta}.
\end{split} 
\label{eq:ec14}
 \end{equation} 

Thus from Eq.~\eqref{eq:ec13} and \eqref{eq:ec14} we find the following inhomogeneous Cauchy equations

\begin{equation}
\begin{split}
 [\phi_{1}(\cos\theta\eta_{1}+\sin\theta\eta_{2})-\phi_{1}(\cos\theta\eta_{1})-\phi_{1}(\sin\theta\eta_{2})]&=\sin\theta[Q_{2}(\eta_{1},\eta_{2})-Q_{2}(\eta_{1},0)]\\
&+\cos\theta[Q_{1}(\eta_{1},\eta_{2})-Q_{1}(0,\eta_{2})],
\end{split}
\label{eq:ice1}
\end{equation}

\begin{equation}
\begin{split}
 [\phi_{2}(\cos\theta\eta_{2}-\sin\theta\eta_{1})-\phi_{2}(\cos\theta\eta_{2})-\phi_{2}(-\sin\theta\eta_{1})]&=\sin\theta[Q_{1}(\eta_{1},\eta_{2})-Q_{1}(0,\eta_{2})]\\
&+\cos\theta[Q_{2}(\eta_{1},\eta_{2})-Q_{2}(\eta_{1},0)].
\end{split}
\label{eq:ice2}
\end{equation}

{\renewcommand\addcontentsline[3]{} \subsubsection*{Bound on $\mathfrak{B}_{r/2}$}}

Now that $\phi_{1}$ and $\phi_{2}$ are decoupled, we continue only with $\phi_{1}$ as with $\phi_{2}$ is analogous and the same upper bound is obtained. We recall that the derivative of a vector with respect to a vector can be represented as a matrix. Thus when we differentiate Eq.~\eqref{eq:ice1} in the direction of $\eta_{2}$ and evaluate at $\eta_{2}=0$, we obtain the following matrix-valued equation

\begin{equation}
\nabla\phi_{1}(\cos\theta\eta_{1})-\frac{\sigma\varGamma_{1}\sigma^{T}}{2}+\frac{Q_{12}(0,0)}{\tan\theta}= Q_{22}(\eta_{1})+\frac{Q_{12}(\eta_{1})}{\tan\theta}.
\end{equation} 
Here $Q_{12}(\eta_{1},0):\reals^{2n}\to\mathbb{C}^{2n\times2n}$ and $Q_{22}(\eta_{1},0):\reals^{2n}\to\mathbb{C}^{2n\times2n}$ are defined as 
\begin{equation}\label{eq:}
 Q_{12}(\eta_{1},0):=\frac{\partial^{2}Q(\eta_{1}+t\eta_{1},s\eta_{2})}{\partial s \partial t}\Bigr|_{s=t=0} \qquad\text{and}\qquad Q_{22}(\eta_{1}):=\frac{\partial^{2}Q(\eta_{1},s\eta_{2}+t\eta_{2})}{\partial s \partial t}\Bigr|_{s=t=0}.
\end{equation}

 Accordingly, we integrate the previous equation twice from zero to $\eta$, $\norm{\eta}_{2}\leq r/2$. We obtain for $\xi\in\mathfrak{B}_{r/2}$

\begin{equation}
 \chi_{1}(\xi)=\exp[-\xi\cdot\left(\frac{\sigma\varGamma_{1}\sigma^{T}}{4}-\frac{V}{2}\right)\xi-F\left(\frac{\xi}{\cos\theta}\right)] ,
 \label{eq:sol1}
\end{equation}
with 
\begin{equation*}
 V:=-\frac{Q_{12}(0,0)}{\tan\theta}=\frac{\sigma(\Tr gR_{1}R_{2}^{T})\sigma^{T}}{\tan\theta},
\end{equation*}
\begin{equation*}
F(\xi):= \cos^{2}\theta\int_{\mathcal{C}(\xi)}\left( \int_{\mathcal{C}(\eta)} \left(Q_{22}(\eta_{1})+\frac{Q_{12}(\eta_{1})}{\tan\theta}\right)\cdot d\eta_{1}\right)\cdot d\eta,
\end{equation*}
where $\mathcal{C}(\xi),\mathcal{C}(\eta)$ are curves in phase space connecting the origin with the vectors $\xi$ and $\eta$. Moreover, these last terms can be upper bounded by (see Appendix on section \ref{sec:appendix})
\begin{align}
\norm{V}_{2}&\leq \left(\frac{\sqrt{24 n^{2}\kappa}}{|\tan\theta|}\right)\sqrt{\varepsilon},\label{eq:bddV}\\
\left|F\left(\frac{\xi}{\cos\theta}\right)\right|^{2} &\leq \left(\frac{n^{2}\kappa \norm{\xi}_{2}^{4}}{2\tan^{2}\theta}\right)\varepsilon^{2/3}, \label{eq:bdddF}
\end{align}

where the largest absolute fourth moment of $\rho_{ab}$ is defined as 
\begin{equation}
\kappa:= \max\left\{ \norm{\rho_{ab}R^{2}_{\xi}R^{2}_{\eta}}_{1}\;\big|\;\norm{\xi}_{2}=\norm{\eta}_{2}=1\right\}.
\end{equation}

At this point we can show that the CM of $\rho_{1}$ and $\rho_{2}$ are $\varepsilon-$close. From differentiating Eq. \eqref{eq:ec10} with respect to $\eta_{2}$ and evaluating at zero, 
we get the relation between the CMs of $\rho_{1}$ and $\rho_{2}$,

\begin{equation*}
\varGamma_{1}-\varGamma_{2}=\frac{2}{\cos^{2}\theta}V. 
\end{equation*}
Hence, from Eq.~\eqref{eq:bddV}

\begin{equation}
 \norm{\varGamma_{1}-\varGamma_{2}}_{2}\leq \left(\frac{\sqrt{384n^{2}\kappa}}{|\sin2\theta|}\right) \ \sqrt{\varepsilon}.
\end{equation}

Now we can proceed to show that for $\xi\in\mathfrak{B}_{r/2}$, the characteristic function of the state $\rho_{1}$ is $\varepsilon-$close to the Gaussian characteristic 
function $\Phi=\exp[-\xi\cdot\varGamma_{1}\xi/4]$. \\

We use Eq.~\eqref{eq:sol1}, $|e^{x}-1|\leq |x|\max\left\{1,e^{\operatorname{Re}[x]}\right\}$  and $|\chi_{1}(\xi)|^{2}=|\Phi(\xi)|^{2}e^{\xi\cdot V\xi-2\operatorname{Re}[F(\xi/\cos\theta)]}$ to write the bound as 

\begin{align}
 \int\limits_{\xi\in\mathfrak{B}_{r/2}}\left|\chi_{1}(\xi)-\Phi(\xi)\right|^{2}d\xi &= \int\limits_{\xi\in\mathfrak{B}_{r/2}}\left|\Phi(\xi)|^{2}|\exp[\xi\cdot V\xi/2-F(\xi/\cos\theta)]-1\right|^{2}d\xi \nonumber\\
 &\leq\int\limits_{\xi\in\mathfrak{B}_{r/2}}|\chi_{1}(\xi)|^{2}\left|\frac{\xi\cdot V\xi}{2}-F(\xi/\cos\theta)\right|^{2}d\xi, \nonumber\\
  &\leq \frac{1}{2}\int\limits_{\xi\in\mathfrak{B}_{r/2}} |\chi_{1}(\xi)|^{2} \left(\frac{|\xi\cdot V\xi|^{2}}{4} + |F(\xi/\cos\theta)|^{2}\right) d\xi  \nonumber \\
 &\leq \left(\frac{4n^{2}\kappa\varepsilon^{2/3}}{\tan^{2}\theta}\right)\int\limits_{\xi\in\mathfrak{B}_{r/2}} |\chi_{1}(\xi)|^{2} \norm{\xi}_{2}^{4} d\xi,  \label{eq:ibdd}
\end{align}
Here in the second inequality we have used again $|z_{1}-z_{2}|^{2}\leq (|z_{1}|^{2}+|z_{2}|^{2})/2$ for $z_{1},z_{2}\in\mathbb{C}$ and in the last inequality Eq.~\eqref{eq:bddV} and Eq.~\eqref{eq:bdddF}. We show now that 

\begin{equation}\label{eq:bd4moments}
\frac{1}{(2\pi)^{n}}\int\limits_{\xi\in\mathfrak{B}_{r/2}} |\chi_{1}(\xi)|^{2} \norm{\xi}_{2}^{4} d\xi \leq \frac{512n^{2}\kappa}{\pi^{n}}\left(\frac{1+3\sin2\theta}{\cos^{4}\theta}\right).
\end{equation}

We used again the Wigner representation, Eq.\eqref{eq:wigrepsymcf}, of the characteristic function $|\chi_{1}(\xi)|^{2}$ in order to compute the bound of Eq. \eqref{eq:bd4moments}
\begin{align*}
\frac{1}{(2\pi)^{n}} \int\limits_{\xi\in\mathfrak{B}_{r/2}} |\chi_{1}(\xi)|^{2} \norm{\xi}_{2}^{4}  d\xi &\leq \frac{1}{(2\pi)^{n}} \int\limits_{\xi\in\reals^{2n}} |\chi_{1}(\xi)|^{2} \norm{\xi}_{2}^{4}  d\xi, \\
&=\frac{1}{(2\pi)^{n}} \sum_{k,l=1}^{2n}\ \int\limits_{\xi\in\reals^{2n}} |\chi_{1}(\xi)|^{2} \xi^{2}_{k}\xi^{2}_{l} d\xi , \\
&= \frac{\partial^{4}\mathcal{W}(0)}{\partial\eta^{2}_{k}\partial\eta^{2}_{l}}.
\end{align*}

Using Lemma \ref{l:qcfmoments2} with the help of the CCR gives

\begin{equation*}
 \sum_{k,l=1}^{2n} \frac{\partial^{4}\mathcal{W}(0)}{\partial\eta^{2}_{k}\partial\eta^{2}_{l}}= \frac{8}{\pi^{n}} \sum_{k,l=1}^{2n} \Tr[\rho\mathcal{P}\{R^{2}_{k},R^{2}_{l}\}].
\end{equation*}

Here $\rho$ is again the density operator corresponding to the characteristic function $|\chi(\xi)|^{2}$. We use again that $\mathcal{P}\rho=\rho\mathcal{P}$ (see subsection \ref{subsec:2}) and H\"older's inequality to bound

\begin{equation*}
\frac{8}{\pi^{n}} \sum_{k,l=1}^{2n} \Tr[\rho\mathcal{P}\{R^{2}_{k},R^{2}_{l}\}] \leq \frac{8}{\pi^{n}} \sum_{k,l=1}^{2n} \left(\norm{\sqrt{\rho}R^{2}_{k}R^{2}_{l}\sqrt{\rho}}_{1}+\norm{\sqrt{\rho}R^{2}_{l}R^{2}_{k}\sqrt{\rho}}_{1}\right).
\end{equation*}

Now using Cauchy-Schwartz inequality and the ciclycity properties for Schwartz operators we find 

\begin{align}
\frac{8}{\pi^{n}} \sum_{k,l=1}^{2n} \Tr[\rho\mathcal{P}\{R^{2}_{k},R^{2}_{l}\}] &\leq \frac{16}{\pi^{n}}\left( \sum^{2n}_{k=1}(\Tr\rho R^{4}_{k})^{1/2} \right)^{2}, \nonumber \\
&\leq  \frac{32n}{\pi^{n}} \sum^{2n}_{k=1}\Tr\rho R^{4}_{k}.  \label{eq:rel4m}
\end{align}

Since we want to specify all the constants in terms of the moments of the output state $\rho_{ab}$ we need to do the following computations. Using the explicit form of the BS transformation Eq.~\eqref{eq:bstheta} and the derivatives of $\chi_{1}(\xi)$, we obtain after a tedious, but straightforward calculation
\begin{equation*}
\cos^{4}\theta \sum^{2n}_{k=1}\Tr\rho_{1}R^{4}_{1k} +\sin^{4}\theta \sum^{2n}_{k=1}\Tr\rho_{2}R^{4}_{2k}-6\sin^{2}\theta\cos^{2}\theta\sum^{2n}_{k=1}\Tr\rho_{1}R^{2}_{1k}\Tr\rho_{2}R^{2}_{2k}= \sum^{2n}_{k=1}\Tr\rho_{ab}R^{4}_{k},
\end{equation*}
which together with the positivity of the fourth moments implies
\begin{equation}\label{eq:in_out_moments}
\sum^{2n}_{k=1}\Tr\rho_{1}R^{4}_{k}\leq 6\tan^{2}\theta\sum^{2n}_{k=1}\Tr\rho_{1}R^{2}_{1k}\Tr\rho_{2}R^{2}_{2k}+ \frac{1}{\cos^{4}\theta }\sum^{2n}_{k=1}\Tr\rho_{ab}R^{4}_{k}.
\end{equation}
Moreover, the relation between the fourth moments of the symmetrized state $\rho$ and $\rho_{1}$ is given by

\begin{equation}
\Tr\rho R^{4}_{k} =2\Tr\rho_{1}R^{4}_{k}+6(\Tr\rho_{1}R^{2}_{k})^{2}, \qquad k=1,\ldots,2n. \label{eq:bd4symm}
\end{equation}
Combining Eqs. \eqref{eq:in_out_moments} and \eqref{eq:bd4symm} and using  $(\Tr\rho_{1}R^{2}_{k})^{2}\leq \Tr\rho_{1}R^{4}_{k}\leq \kappa$ we obtain
\begin{equation*}
\sum_{k=1}^{2n}\Tr\rho R^{4}_{k} \leq 16n\kappa\left(\frac{1+3\sin2\theta}{\cos^{4}\theta}\right),
\end{equation*}
and Eq. \eqref{eq:rel4m} gives the claimed bound in Eq. \eqref{eq:bd4moments}. Hence, inserting Eq.~\eqref{eq:bd4moments} in Eq.~\eqref{eq:ibdd} we obtain the bound for the non-vanishing region $\mathfrak{B}_{r/2}$ 

\begin{equation}\label{eq:bddinside}
\frac{1}{(2\pi)^{n}}\int\limits_{\xi\in\mathfrak{B}_{r/2}}\left|\chi_{1}(\xi)-\Phi(\xi)\right|^{2}d\xi\leq c^{2}_{1}\varepsilon^{2/3},
\end{equation}
where 
\begin{equation}
c_{1}:=32\sqrt{\frac{2}{\pi^{n}}\left(\frac{1+3\sin2\theta}{\sin2\theta}\right)}n^{2}\kappa.
\end{equation}

\begin{figure}[h]\label{fig:theta}
\includegraphics[scale=1]{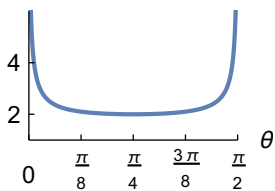}
\caption{Dependence of the stability constant $c_{1}$ in terms of the transmission coefficient $\theta$. The $y-$axis is the function $\sqrt{\frac{1+3\sin2\theta}{\sin2\theta}}$. The constant $c_{1}(\theta,n,\kappa)$ for the one mode 50-50 BS case is $c_{1}\approx 46.2 \kappa.$}
\end{figure}

The result of the theorem follows from Eq.~\eqref{eq:bddoutside}, \eqref{eq:bddinside} and the concavity of the square root.\\


\section{Discussion}

The DS theorem can be understood as the statement that Gaussian bosonic states with same covariance matrix are the only fixed point states of a non-trivial beam splitter transformation. In constrast with other characterizations of Gaussian states such as the one from Hudson\citep{Hu74}, the DS theorem does not require any constraint on the purity of the state. The stability result of Theorem~\ref{teo:pids} provides an explicit estimate of the robustness of a characterization of Gaussian states through linear independence. In particular, we have obtained an estimate of the constants which reflect the fact that the quantum DS theorem is unstable when the beam-splitter is close to being transparent ($\theta=0$) or a mirror ($\theta=\pi/2$). The exact dependence on the transmitivity constant is shown in Fig.~\ref{fig:theta}. Throughout this work, we have made an effort to present explicit constants as well as to improve the order of the error parameter; however this does not mean that they are anywhere close to optimal. In fact, it is not known to us if the optimal constant must necessarily depend on the number of modes $n$ or whether the $\log(1/\varepsilon)^{-1/2}$ dependence can be lifted to a polynomial dependence. \\

The Darmois-Skitovich theorem is not only interesting as a neat characterization problem, but also because of its practical applications: it is the main theoretical concept behind the signal reconstruction method known as \textit{blind source separation}\cite{Co94} which is actively studied in the field of communication and signal processing. We hope with this study of the stability of the quantum DS theorem to stimulate a further investigation of this theorem and its extensions in the quantum information community.


\subsection{Auxiliary Lemmas}\label{technicallemmassubsection}
\begin{lemma}
Let $S\in GL(4n,\mathbb{R})$ such that
\begin{equation}
 S\begin{pmatrix}{} \varGamma & 0 \\ 0 & \varGamma \end{pmatrix}S^{T}=\begin{pmatrix}{} \cdot & 0 \\ 0 & * \end{pmatrix} \quad \text{is $2n\times 2n$ block diagonal for all symmetric $\varGamma\in\mathbb{R}^{2n\times2 n}$.}
\label{eq:1}
 \end{equation}
Then it follows that:
\begin{enumerate}[(i)]
 \item $S$ is either of the form
 \begin{enumerate}[(a)]
 \item $S=\begin{pmatrix}{} A & 0 \\ 0 & D \end{pmatrix}$ or $S=\begin{pmatrix}{} 0 & B \\ C & 0 \end{pmatrix}$, 
 \item $S=\begin{pmatrix}{} A & B \\ C & D \end{pmatrix}$, 
 \end{enumerate}
 with $A,B,C$ and $D\in\mathbb{R}^{2n\times 2n}$ invertible.
 \item \begin{equation}
 S\begin{pmatrix}{} X & 0 \\ 0 & X \end{pmatrix}S^{T}=\begin{pmatrix}{} \cdot & 0 \\ 0 & * \end{pmatrix} \quad \text{is $2n\times 2n$ block diagonal for all $X\in\mathbb{R}^{2n\times 2n}$.}
\label{eq:2}
 \end{equation}
 
\end{enumerate}
\label{l:lppm}
\end{lemma}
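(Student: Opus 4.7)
Writing $S = \begin{pmatrix} A & B \\ C & D \end{pmatrix}$ in $2n \times 2n$ blocks, the block-diagonality hypothesis on $S(\varGamma \oplus \varGamma) S^T$ reduces---by reading off the upper-right block---to the single identity
\begin{equation*}
A\varGamma C^T + B\varGamma D^T = 0 \qquad \text{for every symmetric } \varGamma \in \mathbb{R}^{2n\times 2n}. \quad (\star)
\end{equation*}
My plan for part (i) is a case analysis on whether the two diagonal blocks $A$ and $D$ are invertible.

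For the main case in which both $A$ and $D$ are invertible, left-multiply $(\star)$ by $A^{-1}$ and set $E := A^{-1}B$, obtaining $\varGamma C^T + E\varGamma D^T = 0$. Plugging $\varGamma = I$ yields $C = -DE^T$; reinserting this and cancelling the invertible factor $D^T$ on the right leaves $[E,\varGamma] = 0$ for every symmetric $\varGamma$. Testing against the symmetric generators $e_i e_i^T$ and $e_i e_j^T + e_j e_i^T$ shows that the common commutant of all symmetric matrices in $M_{2n}(\mathbb{R})$ consists only of scalar multiples of the identity, so $E = \alpha I$ for some $\alpha \in \mathbb{R}$. Consequently $B = \alpha A$ and $C = -\alpha D$; the value $\alpha = 0$ recovers the block-diagonal form of case (i)(a), while $\alpha \neq 0$ gives case (i)(b) with all four blocks invertible.

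The remaining work is the degenerate regime. Suppose one of the diagonal blocks, say $A$, is singular, and pick $v \in \ker A \setminus \{0\}$. Specialising $(\star)$ to $\varGamma = vv^T$ yields $(Bv)(Dv)^T = 0$, so $Bv = 0$ or $Dv = 0$; using the bilinear test $\varGamma = vw^T + wv^T$ for arbitrary $w$ then couples the four blocks pointwise along $\ker A$. Feeding these identities into the global invertibility of $S$---which forbids $\begin{pmatrix}A \\ C\end{pmatrix}$ or $\begin{pmatrix}B \\ D\end{pmatrix}$ from having a nontrivial common kernel---and iterating, one forces $A = 0$ outright. With $A = 0$, $(\star)$ collapses to $B\varGamma D^T = 0$; the rank-one slice $\varGamma = xx^T$ yields $\ker B \cup \ker D = \mathbb{R}^{2n}$, so one of the two must equal all of $\mathbb{R}^{2n}$, and invertibility of $S$ excludes $B = 0$, leaving $D = 0$ and the antidiagonal form of case (i)(a). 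The main obstacle is precisely this bookkeeping: the rank-one tests deliver only pointwise constraints, and fusing them with the global invertibility of $S$ to collapse the block structure requires care.

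Part (ii) then follows at once from (i). In case (i)(a) the off-diagonal blocks of $S(X \oplus X) S^T$ each contain a factor $C = 0$ or $A = 0$ (respectively $D = 0$ or $B = 0$ in the antidiagonal subcase), so they vanish for any $X$. In case (i)(b), the explicit relations $B = \alpha A$ and $C = -\alpha D$ give
\begin{equation*}
A X C^T + B X D^T = -\alpha\, A X D^T + \alpha\, A X D^T = 0
\end{equation*}
for arbitrary $X \in \mathbb{R}^{2n\times 2n}$, so the symmetry hypothesis in $(\star)$ automatically upgrades to a statement valid for all matrices.
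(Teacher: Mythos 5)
Your treatment of the case where both diagonal blocks $A$ and $D$ are invertible is correct and takes a genuinely different route from the paper's: instead of passing to the tensor formulation $(A\otimes C+B\otimes D)P_{+}=0$ and arguing with symmetrizations of product vectors, you reduce the hypothesis to $[E,\varGamma]=0$ for $E=A^{-1}B$ and all symmetric $\varGamma$, whence $E=\alpha\iden$. This is clean and in fact proves more than part (i) asserts: the proportionality relations $B=\alpha A$, $C=-\alpha D$ are exactly what the paper only extracts later (in the proof of Lemma~\ref{teo:mainclass}, using part (ii) as input), and they make your part (ii) immediate, bypassing the paper's Jordan-form analysis of $P_{+}(A^{T}\otimes\iden+\iden\otimes D^{T})=0$ altogether.

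The degenerate case, however, contains a genuine gap. For $v\in\ker A\setminus\{0\}$ your rank-one test gives $Bv=0$ or $Dv=0$, but the invertibility constraint you invoke only forbids nontrivial common kernels of the block \emph{columns} $\{A,C\}$ and $\{B,D\}$: it excludes the alternative ``both'' and nothing more, since neither $\ker A\cap\ker B\neq\{0\}$ nor $\ker A\cap\ker D\neq\{0\}$ contradicts invertibility of $S$. Hence ``iterating, one forces $A=0$ outright'' does not follow from the stated ingredients. The case can be closed with your tools, but it takes real work rather than bookkeeping: for instance, when $Bv=0$ and $Dv\neq0$ the bilinear identity reduces to $(Aw)(Cv)^{T}+(Bw)(Dv)^{T}=0$ for all $w$, which first forces $Cv$ and $Dv$ to be parallel, then $B=-\mu A$ and $C=\mu D$, so that $S=\bigl(A\oplus D\bigr)\begin{pmatrix}\iden&-\mu\iden\\ \mu\iden&\iden\end{pmatrix}$ is singular together with $A$ --- a contradiction rather than the claimed $A=0$ --- while the subcase $Dv=0$, $Cv\neq0$ needs a separate rank argument before your $B\varGamma D^{T}=0$ endgame applies. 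The paper sidesteps all of this by taking the adjoint, $P_{+}(A^{T}\otimes C^{T}+B^{T}\otimes D^{T})=0$, and feeding in $a\otimes e$ with $a\in\ker A^{T}$ and $e\notin\ker D^{T}$: since the symmetrization of a nonzero product vector never vanishes, $B^{T}a=0$, and $\ker A^{T}\cap\ker B^{T}\neq\{0\}$ is precisely a nontrivial kernel vector of $S^{T}$. In the transposed formulation the blocks pair up with the invertibility of $S$ in the right way; in your direct formulation they do not, and that mismatch is exactly where your sketch stalls.
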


\begin{proof}[Proof of Lemma \ref{l:lppm}]
We decompose $S=\begin{pmatrix}{} A & B \\ C & D \end{pmatrix}$ into blocks $A,B,C$ and $D\in\reals^{n\times n}$ and observe that
equation (\ref{eq:1}) is equivalent to 
\begin{equation}
 A\varGamma C^{T}+B\varGamma D^{T}=0 \quad \text{for all $\varGamma=\varGamma^{T}$.}
 \label{eq:3}
\end{equation}
Using tensor notation this equation can be written\cite{HJ07} as
\begin{align*}
& (A\otimes C+B\otimes D)|\varGamma\rangle=0 \quad\quad\text{for all $|\varGamma\rangle\in\mathbb{R}^{4n}\otimes\mathbb{R}^{4n}$ symmetric}.\\
\iff& (A\otimes C+B\otimes D)P_{+}|X\rangle=0 \quad\quad\text{for all $|X\rangle\in\mathbb{R}^{4n}\otimes\mathbb{R}^{4n}$.}\\
\iff& (A\otimes C+B\otimes D)P_{+}=0.
\end{align*}

\begin{equation}
\iff P_{+}(A^{T}\otimes C^{T}+B^{T}\otimes D^{T})=0,
\label{eq:sym}
\end{equation}
where $P_{+}$ denotes the projector onto the symmetric subspace of $\mathbb{R}^{4n}\otimes\mathbb{R}^{4n}$.
We make the following remark that will be used frequently during this proof:
the symmetrization or symmetric component of a non-zero product state does not vanish. Suppose it does, then $P_{+}(|v\rangle\otimes|w\rangle)=0$ and 
$|v\rangle\otimes|w\rangle+|w\rangle\otimes|v\rangle=0$. Performing the scalar product with $\langle v|\otimes\langle w|$ 
leads to $\norm{v}^{2}\norm{w}^{2}+|\langle v|w\rangle|^{2}=0$ which is only zero if and only if $|v\rangle=|w\rangle=0$.\\

\textbf{On (i):} We now prove the first part of the Lemma by considering the two cases:\\
(a) One of the submatrices $A$,$B$,$C$ or $D$ is zero: we only treat the case $A=0$, the others are similar.
Then $A^{T}\otimes C^{T}=0$ and $B^{T}$ is invertible (otherwise $S$ would not have full rank). Since any non-zero product
$B^{T}v\otimes D^{T}w$ (for some $v,w\in\mathbb{R}^{2n}$) would contain a non-vanishing symmetric component, 
equation (\ref{eq:sym}) implies that $D^{T}=0$.

(b) We prove by contradiction that in this case the four submatrices are invertible. For instance, assume that $A$ is not invertible. Then there exists
a vector $0\neq|a\rangle\in\text{Ker}A^{T}$.  We choose $|e\rangle\notin\text{Ker}D^{T}$ (recall that $D\neq0$) and with (\ref{eq:sym}) we then find

\begin{equation}
0=P_{+}(A^{T}\otimes C^{T}+B^{T}\otimes D^{T})|a\rangle\otimes|e\rangle=P_{+}(B^{T}|a\rangle\otimes D^{T}|e\rangle).
\end{equation}
By the same argument as in $(a)$ we now conclude $|a\rangle\in\text{Ker}B^{T}$. Moreover,

\begin{equation*}
S^{T}\begin{pmatrix}a\\0 \end{pmatrix} = \begin{pmatrix}A^{T} & C^{T} \\ B^{T} & D^{T} \end{pmatrix} \begin{pmatrix}a\\0 \end{pmatrix}=0.
\end{equation*}

The latter is in contradiction to the invertibility of $S$ (if $S$ is an invertible matrix then the kernel is trivial),
hence A --and due to analogous reasoning-- $B,C$ and $D$ are invertible.\\

\textbf{On (ii):} We now proof the second part of the Lemma, Eq. (\ref{eq:2}), by showing 
that the equivalent expression $AX C^{T}+BX D^{T}=0$ 
for all $X\in\mathbb{R}^{2n\times 2n}$ is true.\\
This is trivially satisfied if $S$ is given in form of case (i,a). Therefore we are left with the case (i,b) where in particular $B$ and $C$ are invertible.
W.l.o.g. we choose $B=C=\iden$ (this can be done by redefining $A\to A^{-1}B$ and $D\to D^{-1}C$ in (\ref{eq:3})) 
and equation (\ref{eq:sym}) reads

\begin{equation}
  P_{+}(A^{T}\otimes \iden+\iden\otimes D^{T})=0.
  \label{eq:44}
\end{equation}
We now show in three steps that $A^{T}\otimes \iden+\iden\otimes D^{T}=0$, which then concludes the proof.
First we show that there exists $\lambda\in\mathbb{C}$ such that $Spec(A^{T})=\{\lambda\}$ and $Spec(D^{T})=\{-\lambda\}$.
To this purpose we choose 
eigenvectors $|e\rangle$ of $A^{T}$ and $|f\rangle$ of and $D^{T}$ with eigenvalues $\lambda$ and $\omega$ respectively. Then
\begin{equation*}
 0=P_{+}(A^{T}\otimes\iden+\iden\otimes D^{T})|e\rangle\otimes|f\rangle=(\lambda+\omega)P_{+}(|e\rangle\otimes|f\rangle).
\end{equation*}
Again, since the symmetrization of a non-zero product state is different from zero, we find $\lambda=-\omega$. Note that this holds for arbitrary
eigenvalues $\lambda$ of $A^{T}$ and $\omega$ of $D^{T}$. \\
Second, using the Jordan normal form decomposition, we decompose $A^{T}$ (and $D^{T}$) into a diagonalizable $\lambda\iden$ and nilpotent part $N_{A} (N_{D})$
and observe that $(A^{T}\otimes\iden+\iden\otimes D^{T})=(\lambda\iden\otimes\iden+N_{A}\otimes\iden)+\iden\otimes(-\lambda\iden+N_{D})=(N_{A}\otimes\iden+\iden\otimes N_{D})$.\\
Finally, equation (\ref{eq:44}) reads

\begin{equation}
P_{+}(N_{A}\otimes\iden+\iden\otimes N_{D})=0, 
\label{eq:55}
\end{equation}
and we can conclude the proof by deriving that this implies $(N_{A}\otimes\iden+\iden\otimes N_{D})=0$. This is the third step. \\

Assume $N_{A}\otimes\iden+\iden\otimes N_{D}\neq0$. Using the symmetry argument about non-zero product states we find 
$N_{A}\neq0$ and $N_{D}\neq0$. 
Let $s$ be such that $N_{A}^{s}=0$ and $N_{A}^{s-1}\neq0$. Then we multiply (\ref{eq:55}) from the right by $N_{A}^{s-1}\otimes\iden$ to get $P_{+}(N_{A}^{s-1}\otimes N_{D})=0$.
 But this, in turn, implies $N_{A}^{s}\otimes N_{D}=0$ and leads to a contradiction. Therefore 
\begin{equation}
 (N_{A}\otimes\iden+\iden\otimes N_{D})=0.
\end{equation}

\end{proof}

\begin{lemma}
 Let $\rho_{12}$  be a density operator of a bipartite system with reduced states $\rho_{1}$ and $\rho_2$. If $\tilde{\rho_{1}}\otimes\tilde{\rho_{2}}$ describe an arbitrary product state and $\norm{\rho_{12}-\tilde{\rho_{1}}\otimes\tilde{\rho_{2}}}_{1}\leq\varepsilon$, then
 $\norm{\rho_{12}-\rho_{1}\otimes\rho_{2}}_{1}\leq3\varepsilon$.
\label{lemma:closetraces}
 \end{lemma}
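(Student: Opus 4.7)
The plan is to use the triangle inequality together with the well-known monotonicity of the trace norm under partial trace (the partial trace being a completely positive trace-preserving map). First I would insert the product of the true marginals as an intermediate step and write
\begin{equation*}
\norm{\rho_{12}-\rho_{1}\otimes\rho_{2}}_{1}\leq\norm{\rho_{12}-\tilde{\rho}_{1}\otimes\tilde{\rho}_{2}}_{1}+\norm{\tilde{\rho}_{1}\otimes\tilde{\rho}_{2}-\rho_{1}\otimes\rho_{2}}_{1}.
\end{equation*}
The first term is already bounded by $\varepsilon$ by hypothesis, so the work reduces to estimating the distance between the two product states.

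Next I would split the product difference with a second triangle inequality by inserting $\rho_{1}\otimes\tilde{\rho}_{2}$ and use the multiplicativity of the trace norm on tensor products together with $\norm{\rho_{j}}_{1}=\norm{\tilde{\rho}_{j}}_{1}=1$ to obtain
\begin{equation*}
\norm{\tilde{\rho}_{1}\otimes\tilde{\rho}_{2}-\rho_{1}\otimes\rho_{2}}_{1}\leq\norm{\tilde{\rho}_{1}-\rho_{1}}_{1}+\norm{\tilde{\rho}_{2}-\rho_{2}}_{1}.
\end{equation*}

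At this point the key step is to control each marginal difference. Since $\rho_{j}$ is by definition the partial trace of $\rho_{12}$ over the opposite factor, and the partial trace is trace-norm contractive, I get
\begin{equation*}
\norm{\rho_{j}-\tilde{\rho}_{j}}_{1}=\norm{\mathrm{tr}_{\bar{j}}\!\left(\rho_{12}-\tilde{\rho}_{1}\otimes\tilde{\rho}_{2}\right)}_{1}\leq\norm{\rho_{12}-\tilde{\rho}_{1}\otimes\tilde{\rho}_{2}}_{1}\leq\varepsilon,
\end{equation*}
for $j=1,2$. Combining the three inequalities yields the advertised factor of $3$. There is no real obstacle here: the only subtlety worth flagging is the use of contractivity of the partial trace, which is standard (it follows from the Russo--Dye theorem or from the fact that the partial trace is a CPTP map), and the multiplicativity of the trace norm on tensor products, which is immediate from the singular value decomposition.
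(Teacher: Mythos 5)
Your proposal is correct and follows essentially the same route as the paper: the same two applications of the triangle inequality reduce everything to bounding $\norm{\tilde{\rho}_{j}-\rho_{j}}_{1}$, and your appeal to trace-norm contractivity of the partial trace is exactly what the paper proves inline via the duality $\norm{X}_{1}=\sup_{\norm{Y}\leq1}\Tr[YX]$ applied to observables of the form $Y\otimes\iden$. No gap; you simply cite as standard the one step the paper spells out.
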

\begin{proof}
Using the triangle inequality twice, we find
\begin{align*}
 \norm{\rho_{12}-\rho_{1}\otimes\rho_{2}}_{1} &\leq \norm{\rho_{12}-\tilde{\rho_{1}}\otimes\tilde{\rho_{2}}}_{1}+\norm{\tilde{\rho_{1}}\otimes\tilde{\rho_{2}}-\rho_{1}\otimes\rho_{2}}_{1},\\
 &\leq \varepsilon + \norm{\tilde{\rho_{1}}\otimes\tilde{\rho_{2}}-\tilde{\rho_{1}}\otimes\rho_{2}}_{1}+ \norm{\tilde{\rho_{1}}\otimes\rho_{2}-\rho_{1}\otimes\rho_{2}}_{1},\\
 &=\varepsilon+\norm{\tilde{\rho_{2}}-\rho_{2}}_{1}+\norm{\tilde{\rho_{1}}-\rho_{1}}_{1}.
\end{align*}
Exploiting  that $ \norm{X}_{1}=\sup_{Y:\norm{Y}\leq1}\ \Tr[YX]$ we can bound
\begin{align*}
 \norm{\tilde{\rho_{1}}-\rho_{1}}_1&=\sup_{Y:\norm{Y}\leq1}\ \Tr[(Y\otimes\iden) (\tilde{\rho_{1}}\otimes\tilde{\rho_{2}}-\rho_{12})],\\
 &\leq \sup_{\hat{Y}:\norm{\hat{Y}}\leq1}\ \Tr[\hat{Y} (\tilde{\rho_{1}}\otimes\tilde{\rho_{2}}-\rho_{12})],\\
 &=\norm{\tilde{\rho_{1}}\otimes\tilde{\rho_{2}}-\rho_{12}}_{1}\leq\epsilon,
\end{align*}
and similar for the other term.
\end{proof}

\begin{proof}[Proof of Lemma (\ref{lemma:region})]

We follow the proof idea of Lemma 1 from Ref. \onlinecite{Ga74}.

W.l.o.g assume $0<\theta\leq\pi/4$ and set $\eta_{2}=\tan\theta\eta_{1}, \eta_{1}=\xi$ in Eq.~\ref{eq:pfegs}. In case $\theta>\pi/4$, set $\eta_{1}=\eta_{2}/\tan\theta$ in Eq.~\ref{eq:pfegs} and proceed likewise. Then Eq.~\ref{eq:pfegs} becomes 
\begin{equation*}
 \chi_{1}\left(\left(1+\tan^{2}\theta\right)\cos\theta\xi\right)=
 \chi_{1}\left(\cos\theta\xi\right)\chi_{1}(\sin\theta\tan\theta\xi)\chi_{2}(\cos\theta\tan\theta\xi)\chi_{2}(-\sin\theta\xi)
+G(\xi,\tan\theta\xi),
\end{equation*}
for all $\xi\in\reals^{2n}$. Replace $\xi\mapsto(\xi/\cos\theta)$ in the previous equation to obtain 
\begin{equation*}
 \chi_{1}\left(\left(1+\tan^{2}\theta\right)\xi\right)=
 \chi_{1}\left(\xi\right)\chi_{1}(\tan^{2}\theta\xi)|\chi_{2}(\tan\theta\xi)|^{2}
+G\left(\frac{\xi}{\cos\theta},\frac{\tan\theta\xi}{\cos\theta}\right).
\end{equation*}

Since $\norm{G}\leq3\varepsilon$, we have for all $\xi\in\reals^{2n}$:
\begin{equation*}
 \left|\chi_{1}\left(\left(1+\tan^{2}\theta\right)\xi\right)\right|\geq 
 |\chi_{1}(\xi)||\chi_{1}(\tan^{2}\theta\xi)||\chi_{2}(\tan\theta\xi)|^{2}
-3\varepsilon.
\end{equation*}

Similarly, for $0<\theta\leq\pi/4$ and $\eta_{1}=-\tan\theta\eta_{2},\eta_{2}=\xi$ we arrive to 
\begin{equation*}
 \left|\chi_{2}\left(\left(1+\tan^{2}\theta\right)\xi\right)\right|\geq 
 |\chi_{2}(\xi)||\chi_{2}(\tan^{2}\theta\xi)||\chi_{1}(\tan\theta\xi)|^{2}
-3\varepsilon.
\end{equation*}

With $\gamma(\xi):=\min_j\min_{\norm{\eta}<\norm{\xi}}\left|\chi_{j}(\eta)\right|$ we obtain 
$
 \gamma\left(\left(1+\tan^{2}\theta\right)\xi\right)\geq \gamma^{4}\left(\xi\right)-3\varepsilon
$.
Replacing $\xi$ by $\left(1+\tan^{2}\theta\right)^{k}\xi$ with $k\in\mathbb{N}$ in the previous equation gives
\begin{equation}
 \gamma\left(\left(1+\tan^{2}\theta\right)^{k+1}\xi\right)\geq \gamma^{4}\left(\left(1+\tan^{2}\theta\right)^{k}\xi\right)-3\varepsilon \qquad \text{for all }\xi\in\reals^{2n},k\in\mathbb{N}.
\label{eq:rproof1}
 \end{equation}

It is a fact that for any classical characteristic function $\phi(t)$ with variance $\lambda$ the following inequality holds (see for instance Ref. \onlinecite[p. 89]{U99})

\begin{equation}
 |\phi(t)|\geq 1- \frac{1}{2}\lambda t^{2}.
 \label{eq:cineq}
\end{equation}

Let us fix $\xi$ in the direction of phase space in which we obtain the largest variance $\lambda$ of $\rho_{1}$ and $\rho_{2}$
and consider the region where $\norm{\xi}_{2}\leq\sqrt{\frac{1}{\lambda}}$. Then from Eq. \ref{eq:cineq} for $\norm{\xi}_{2}\leq\sqrt{\frac{1}{\lambda}}$ we have $|\chi(\xi)|\geq1/2$.

Using Eq. (\ref{eq:rproof1}) and the inequality $(1-a)^{n}\geq1-na, \forall n\in\mathbb{N}, \forall a\in[0,1]$, we can show by induction that

\begin{equation*}
 \gamma\left(\left(1+\tan^{2}\theta\right)^{k+1}\xi\right)\geq \left(\frac{1}{2}\right)^{4^{k}}-4\varepsilon \qquad \text{for }k\in\mathbb{N},\norm{\xi}_{2}\leq\sqrt{\frac{1}{\lambda}}.
\end{equation*}
Moreover, for $\varepsilon< 1$ we clearly have 

\begin{equation}\label{eq:induction}
 \gamma\left(\left(1+\tan^{2}\theta\right)^{k+1}\xi\right)\geq \left(\frac{1}{2}\right)^{4^{k}}-4\varepsilon^{1/12} \qquad \text{for }k\in\mathbb{N},\norm{\xi}_{2}\leq\sqrt{\frac{1}{ \lambda}}.
\end{equation}

Finally, we take $k_{0}$ such that $2^{k_{0}}=\sqrt{\log_{2}\frac{1}{\varepsilon^{1/12}}}$, to obtain with the help of Eq.~\eqref{eq:induction}
\begin{align*}
\gamma\left(\left(1+\tan^{2}\theta\right)^{k_{0}}\xi\right)&\geq\left(\frac{1}{2}\right)^{4^{k_{0}-1}}-4\varepsilon^{1/12},\\
&= 12\varepsilon^{1/12}.
\end{align*}
We thus have $\gamma\left(\left(1+\tan^{2}\theta\right)^{k_{0}}\xi\right)>\varepsilon^{1/12}$ and $\left(1+\tan^{2}\theta\right)^{k_{0}}\xi\in\mathfrak{B}_{r}$ as claimed.
\end{proof}

\bigskip
\emph{Acknowledgments:}
I would like to thank Michael M. Wolf for his continuous interest and many insightful discussions thoughout this work. In particular, I thank him for suggesting the problem considered in this paper. In addition, I would like to thank A. Michelangelo for an insightful discussion of Lemma 1 from Ref.~\onlinecite{Ga74}, and M. Keyl for clarifications concerning Schwarz operators. Furthermore, I wish to thank M. Christandl and J. P. Solovej for their hospitality and the financial support received during my visit at the QMATH Center in Copenhagen.  This work was partly written during this visit.

\newpage

\section{Appendix}\label{sec:appendix}

\subsection{Upper bound of $\norm{V}_{2}$, Eq. \ref{eq:bddV}.}\label{appen:A}

Let us write $R_{1k}, k=1,\ldots2n,$ and $R_{2l}, l=1,\ldots2n,$ for the entries of the vector $R_{1}=(Q_{1},P_{1},\ldots,Q_{n},P_{n})$ and $R_{2}=(Q_{n+1},P_{n+1},\ldots,Q_{2n},P_{2n})$ respectively. First, from the orthogonality of the symplectic matrix $\sigma$

\begin{align*}
\norm{V}_{2}&=\frac{\norm{\sigma(\Tr gR_{1}R_{2}^{T})\sigma^{T}}_{2}}{|\tan\theta|}\\
&=\frac{\norm{\Tr gR_{1}R_{2}^{T}}_{2}}{|\tan\theta|}=\frac{\left(\sum_{kl}|\Tr gR_{1k}R_{2l}|^{2}\right)^{1/2}}{|\tan\theta|}.
\end{align*}  
Our plan is to bound each entry $|\Tr gR_{1k}R_{2l}|^{2}$. Since the operator $g$ is the difference of two Schwartz operators it is also Schwartz. Thus we have from Theorem \ref{theo:schwprop} $(iii)-(iv)$ that $|g|^{1/2}$ is a Schwartz operator. If $g=\sum \lambda_{i}|i\rangle\langle i |$ is the spectral decomposition of $g$, we consider the following factorization 
 
 \begin{equation*}
 g=|g|Q, \qquad \text{with} \qquad Q:=\sum_{i}\sgn(\lambda_{i})|i\rangle\langle i|.
 \end{equation*}
 
Clearly, $Q=Q^{-1}$ commutes with $|g|$ and $|g|^{1/2}$.  
 Moreover, from Theorem \ref{theo:schwprop} $(i)$ we will be able to use the trace cyclicity for our following computation. Using twice the Cauchy-Schwarz inequality and $\norm{g}_{1}\leq 3\varepsilon$, we find  

\begin{align*}
|\Tr gR_{1k}R_{2l}|^{2}=|\Tr |g|^{1/2}|g|^{1/2}Q R_{1k}R_{2l}|^{2} & \leq \left(\Tr |g|\right)\left(\Tr |g|R^{2}_{1k}R^{2}_{2l}\right), \\
&\leq 3\varepsilon\sqrt{\Tr|g|R^{4}_{1k}\Tr|g|R^{4}_{2l}}, \\
&\leq (3\varepsilon) \max\{\Tr|g|R^{4}_{1k},\Tr|g|R^{4}_{2l}\}.
\end{align*}

Using again the decomposition of $g$ we obtain for $j=1,2$ that
\begin{align*}
|\Tr|g|R^{4}_{jk}|=|\Tr QgR^{4}|\leq \norm{Q}\norm{gR^{4}_{jk}}_{1}&=\norm{(\rho_{ab}-\rho_{a}\otimes\rho_{b})R^{4}_{jk}}_{1}\\
&\leq 2\norm{\rho_{ab}R^{4}_{jk}}_{1},
\end{align*}
since $R^{4}_{jk}$ is a local operator on one part of the output. Consequently, $|\Tr gR_{1k}R_{2l}|^{2}\leq 6\varepsilon\max\{\Tr\rho_{ab}R^{4}_{1k},\Tr\rho_{ab}R^{4}_{2l}\}\leq 6\varepsilon \kappa$ where $\kappa:= \max\left\{ \norm{\rho_{ab}R^{2}_{\xi}R^{2}_{\eta}}_{1}\;\big|\;\norm{\xi}_{2}=\norm{\eta}_{2}=1\right\}$ is the largest generalized fourth moment of $\rho_{ab}$. Note that since $\rho_{ab}$ is a Schwartz operator $\kappa<\infty$. Thus
\begin{equation*}
\norm{V}_{2}\leq \frac{\sqrt{24 n^{2}\kappa\varepsilon}}{|\tan\theta|}.
\end{equation*}

\subsection{Upper bound of $|F(\xi)|$, Eq.\;\ref{eq:bdddF}.}

The line integral of a matrix $A\in\mathbb{C}^{2n\times 2n}$ is defined in terms of the line integrals of the rows of $A$. Namely, if $A_{k},k=1,\ldots,2n$ are the rows of $A$, then
\begin{equation*}
\int A\cdot d\eta := \begin{pmatrix}
\int A_{1}\cdot d\eta  \\
\vdots \\
\int A_{2n}\cdot d\eta
\end{pmatrix}.
\end{equation*}

Thus the line integral of a matrix-valued function is a vector, and the line integral of a vector field is a scalar.
Let us denote by $M:\reals^{2n}\to \mathbb{C}^{2n\times2n}$ a matrix-valued function. The upper bound for Eq.\;\ref{eq:bdddF} is equivalent to bound 
\begin{equation*}
F(\xi):= \int_{\mathcal{C}(\xi)}\left( \int_{\mathcal{C}(\eta)} M(z)\cdot dz\right)\cdot d\eta,
\end{equation*}
with $\xi,\eta\in\mathfrak{B}_{r/2}$ and
\begin{equation*}
M(z)=\cos^{2}\theta\left(Q_{22}(z)+\frac{Q_{12}(z)}{\tan\theta}\right).
\end{equation*}
Let us parametrize the curve $\mathcal{C}(\xi)$ via $\vartheta:[0,1]\to \mathcal{C}$, $t\mapsto t\xi$ and write $M_{kl}(z)$ for the entries of the matrix $M(z)\in\comp^{2n\times 2n}$. Define the matrix-valued function $\reals^{2n}\ni z\mapsto Y(z)\in\comp^{2n\times 2n}$ to have the entries $Y_{kl}(z):=\int_{0}^{1}M_{kl}(sz)ds$. From the explicit parametrization of the line integrals and the Cauchy-Schwarz inequality we obtain that
\begin{align}
|F(\xi)|&\leq  \max_{t\in[0,1]} \left|t\xi\cdot Y(t\xi)\xi \right|, \nonumber \\
&\leq \norm{\xi}^{2}_{2} \; \max_{t\in[0,1]} \left(\sum_{k,l=1}^{2n}|Y(t\xi)_{kl}|^{2} \right)^{1/2} \label{eq:boundF}.
\end{align}

In order to bound $|Y(t\xi)_{kl}|$ we differentiate Eq.~\eqref{eq:q} with the help of Lemma~\ref{l:qcfmoments2} to find 
\begin{equation*}
  Q_{22}(z)=\frac{G_{2}(z)G^{T}_{2}(z)}{\chi^{2}_{1}(\cos\theta z)\chi^{2}_{2}(-\sin\theta z)}-\frac{G_{22}(z)}{\chi_{1}(\cos\theta z)\chi_{2}(-\sin\theta z)}\in \comp^{2n\times 2n},
\end{equation*}
\begin{equation*}
Q_{12}(z)=-\frac{G_{12}(z)}{\chi_{1}(\cos\theta z)\chi_{2}(-\sin\theta z)}-\frac{\nabla\chi_{\rho_{a}}(z)G_{2}^{T}(z)}{\chi^{2}_{1}(\cos\theta z)\chi^{2}_{2}(-\sin\theta z)} \in \comp^{2n\times 2n},
\end{equation*}

for $z\in\reals^{2n}, \norm{z}_{2}\leq r/2$ where
\begin{align*}
G_{2}(z)&=-\frac{i}{2}\sigma \Tr[e^{iz\cdot\sigma R_{1}}\{R_{2},g\}]\in  \comp^{2n},\\
G_{22}(z)&=-\frac{1}{4}\sigma\Tr[e^{iz\cdot\sigma R_{1}}\{\{R_{2},g\},R^{T}_{2}\}]\sigma^{T} \in \comp^{2n\times 2n},\\
G_{12}(z)&=-\frac{1}{4}\sigma\Tr[e^{iz\cdot\sigma R_{1}}\{\{R_{1},g\},R^{T}_{2}\}] \in \comp^{2n\times 2n},\\
\nabla\chi_{\rho_{a}}(z)&=-\frac{i}{2}\Tr[e^{iz\cdot\sigma R_{1}}\{R_{1},\rho_{a}\}]\in  \comp^{2n}.
\end{align*}

Let us write $R_{1k}, k=1,\ldots2n,$ and $R_{2l}, l=1,\ldots2n,$ for the entries of the vector $R_{1}=(Q_{1},P_{1},\ldots,Q_{n},P_{n})$ and $R_{2}=(Q_{n+1},P_{n+1},\ldots,Q_{2n},P_{2n})$ respectively. From Lemma~\ref{lemma:region} we know that for $z\in \mathfrak{B}_{r}$, $\chi(z)>12 \varepsilon^{1/12}$ and therefore we can upper bound each entry of $Y(t\xi)_{kl}$ by

\begin{align*}
|Y(t\xi)_{kl}| &\leq \cos^{2}\theta \left( \frac{\norm{\{\{R_{2k},g\},R_{2l}\}}_{1}}{4(12\varepsilon^{1/12})^{2}}+\frac{\norm{\{R_{2k},g\}}_{1}\norm{\{R_{2l},g\}}_{1}}{4(12\varepsilon^{1/12})^{4}} +  \frac{\norm{\{\{R_{1k},g\},R_{2l}\}}_{1}}{4(12\varepsilon^{1/12})^{2}\tan\theta} +\frac{\norm{\{R_{1k},\rho_{a}\}}_{1}\norm{\{R_{2l},g\}}_{1}}{4(12\varepsilon^{1/12})^{4}\tan\theta} \right). \\
\end{align*}

Following a similar procedure as for the bound of $\norm{V}_{2}$ (see Apendix \ref{appen:A}) we obtain 

\begin{align*}
|Y(t\xi)_{kl}|^{2} &\leq \left(\frac{\kappa\cos^{2}\theta}{8\tan^{2}\theta}\right)\varepsilon^{2/3}, \\
\end{align*}
so from Eq.~\eqref{eq:boundF}

\begin{equation*}
\left|F\left(\frac{\xi}{\cos\theta}\right)\right|^{2}\leq \left(\frac{n^{2}\kappa \norm{\xi}_{2}^{4}}{2\tan^{2}\theta}\right)\varepsilon^{2/3}.
\end{equation*}


\begin{thebibliography}{}

\bibitem{DG53}G.\ Darmois, ``Analyse generale des liaisons stochastiques. Etude partieuiere de l'analyse factorielle lineaire,'' Rev.\ Inst.\ Int.\ Statist.\ \textbf{21}, (1953).

\bibitem{SV54}V.\ P.\ Skitovich, ``Linear forms of independent random variables and the normal distribution law,'' Izv.\ Akad.\ Nauk\ SSSR,\ Ser.\ Mat.\ \textbf{18}, (1954) (Russian). Engl. Translation in: Selected Translations in Math.Stat. and Prob. \textbf{2} (1962) 211-218. 


\bibitem{MRKT91}A. Mann, M Revzen, F. C. Khanna and Y. Takahashi, ``Bifactorizable wavefunctions,'' J. Phys. A: Math. Gen. \textbf{24} 425 (1991); S. C. Springer, J. Lee, M. Bellini, and M. S. Kim, ``Conditions for factorizable output from a beam splitter,'' Phys. Rev. A \textbf{79}, 062303, (2009).


\bibitem{KKW15}M. Keyl, J. Kiukas and R. Werner, ``Schwartz Operators,'' Rev. Math. Phys. 28, 1630001 (2016) 


\bibitem{Hol82}A. Holevo, \textit{Probabilistic and statistical aspects of quantum theory} (Publications of the Scuola Normale Superiore, 2011).

\bibitem{SMD94}R. Simon, N. Mukunda and B. Dutta, ``Quantum noise matrix for multimode systems: $U(n)$ invariance, squeezing, and normal forms,'' Phys. Rev. A \textbf{49}, 1567 (1994).

\bibitem{SSM87}R. Simon, G. Sudarshan and N. Mukunda, ``Gaussian Wigner distributions in quantum mechanics and optics,'' Phys. Rev. A \textbf{36} (1987).

\bibitem{ADMS95}J. Arvind, B. Dutta, N. Mukunda and R. Simon, ``The real symplectic groups in quantum mechanics and optics,'' Pramana, 471 (1995).

\bibitem{feller-vol-2}W.\ Feller, \textit{An introduction to probability theory and its applications: Vol II,} 2nd.\ ed. (Wiley, New York 1971).

\bibitem{KLR73}A.\ Kagan, J.\ Linnik, and C.\ Rao, ``\textit{Characterization problems in mathematical statistics,}'' Wiley, (1973); M.\ P.\ Quine and E.\ Seneta, ``The generalization of Kac-Bernstein theorem,'' Probability and Mathematical Statistics \textbf{19}, 441, (1999).

\bibitem{Lu77} E. Lukacs, ``Stability Theorems," Adv. Appl. Prob. 9, 336 (1977).


%
\bibitem{Hu74}R.\ Hudson, ``When is the wigner quasi-probability density non-negative,'' Rep. Math. Phys. \textbf{61}, 249, (1974)


\bibitem{fisk69}P. R. Fisk, ``A Note on a Characterization of the Multivariate Normal Distribution,'' The Annals of Mathematical Statistics \textbf{41}, 486 (1970).

\bibitem{KY85} L. Klebanov and R. Yanushkevichius, ``$\epsilon$-dependence of $X_{1}+X_{2}$ and $X_{1}-X_{2}$,'' Lietuvos Matematikos Rinkinys \textbf{25} (1985).


\bibitem{Ga74} Y. Gabovich, ``On the stability of certain characteristic properties of the normal distribution,'' Theor. Prob. Appl. \textbf{19}, 365 (1974).


\bibitem{TopD66} G. F. Dell'Antonio. ``On the limits of sequences of normal states," Commun. Pure Appl. Math. \textbf{20}, 413 (1967); E. B. Davies, ``Quantum stochastic processes," Commun. Math. Phys. \textbf{15} 277 (1969).  


\bibitem{U99}N. G. Ushakov, \textit{Selected Topics in Characteristic Functions}  (De Gruyter, 1999). 

\bibitem{Co94}P. Comon, ``Independent component analysis, A new concept?'' Signal Processing \textbf{36}, 287 (1994)
%


\bibitem{Gro76}A. Grossmann, ``Parity Operator and Quantization of $\delta-$functions,'' Commun. math. Phys. \textbf{48}, 191 (1976).

\bibitem{HJ07}R. Horn and C. Johnson, \textit{Topics in Matrix Analysis} (Cambridge University Press, 2007)





\end{thebibliography}

\end{document}